\definecolor{red}{rgb}{1,0,0}
\definecolor{green}{rgb}{0,1,0}
\definecolor{SeaGreen}{RGB}{46,139,87}
\definecolor{Maroon}{RGB}{128,0,0}
\newcommand{\N}{\mathbb{N}}
\newcommand{\C}{{\mathbb{C}}}
\newcommand{\R}{{\mathbb{R}}}
\newcommand{\A}{\mathcal A}
\def\Dg {{\mathcal D}}
\def\Hg {{\mathcal H}}
\newcommand{\LL}{\mathcal L}
\def\Lg {{\mathfrak L}}
\def\PP{\mathcal P}
\def\Ug {{\mathcal U}}
\def\Vg {{\mathcal V}}
\def\Tg {{\mathcal T}}
\def\curl{\text{\rm curl}}
\def\curl{\text{\rm curl\,}}
\def\sign{\text{\rm sign\,}}
\newcommand {\pa}{\partial}
\newcommand {\ar}{\to}
\def\0{\mathbf  0}
\def\XXint#1#2#3{{\setbox0=\hbox{$#1{#2#3}{\int}$ }
\vcenter{\hbox{$#2#3$ }}\kern-.6\wd0}}
 \numberwithin{equation}{section}
\theoremstyle{plain}
\newtheorem{theorem}{Theorem}[section]
\newtheorem{lemma}[theorem]{Lemma}
\newtheorem{proposition}[theorem]{Proposition}
\newtheorem{remark}[theorem]{Remark}
\newtheorem{corollary}[theorem]{Corollary}
\begin{document}
\bibliographystyle{siam}
\title{On the spectrum of non-selfadjoint
  Schr\"odinger operators with compact resolvent}
\author{Y. Almog and B. Helffer}
 \date{}
 \maketitle
 \begin{abstract}
   We determine the Schatten class for the compact resolvent of Dirichlet realizations, in
   unbounded domains, of a class of non-selfadjoint differential
   operators. This class consists of operators
   that can be obtained via analytic dilation from a Schr\"odinger
   operator with magnetic field and a complex electric  potential. As
   an application, we prove, in a variety of examples motivated by Physics,  that
   the system of generalized eigenfunctions associated with the
   operator is complete, or at least the existence of an infinite
   discrete spectrum.
 \end{abstract}
\section{Introduction}
\label{sec:1}
The theory of non-selfadjoint differential operators is at a much less
developed state than that of selfadjoint theory. The lack of
variational methods makes it difficult, in many interesting cases, to
determine whether a non-selfadjoint operator $\PP$ possesses a
complete systems of generalized eigenvectors (by which we mean that
the vector space they span is dense), or even if the spectrum is
non-empty.  In addition, the definition of a closed extension of the
differential operator, is not always a straightforward matter. (There
are, of course, other questions of interest, such as the effectiveness
of the Fourier expansion \cite{he13,da00} which we do not address
here.) There is, however, significant interest in these questions (cf.
for instance \cite{da02,he13a,sikr12,A,trem05} to name just a few
references).

In recent contributions we have considered, together with X.B. Pan,
similar questions for a well defined closed extension
$\overline{\mathcal A}$ (or $\mathcal A^D$) of the differential
operator
\begin{equation}
\label{eq:17}
  \A = -\Big(\nabla-i\frac{x^2}{2}\hat{\i}_y)^2 + icy \,,
\end{equation}
where $\hat{\i}_y$ is a unit vector in the $y$ direction. We studied the
spectrum of this extension both in the entire $(x,y)$ plane \cite{AHP1}, where
we show that $\sigma(\overline{\A})=\emptyset\,$, and for the Dirichlet realization $\mathcal A^D$ of $\A$ in
the half plane
\begin{displaymath}
  \R^2_+ = \{ (x,y)\in\R^2 \,| \, y>0\,\} \,.
\end{displaymath}
In the latter case we show that the spectrum is not empty in the limit
$c\to0$ \cite{AHP3}, and in the limit $c\to\infty$ \cite{AHP2}, where our
techniques involve analytic dilation. The existence of a non-empty
spectrum for general values of $c$ remains an open  question.

In another contribution \cite{AH} we show that the normal state
for a superconductor in the presence of an electric current, and the
magnetic field it induces becomes locally unstable, under some
additional conditions which are omitted here, whenever $(\A-\lambda)^{-1}$
becomes unbounded, where $\lambda$ is a function of the electric current.
The analysis in \cite{AH} rests on some assumptions on the normal
electric potential, that are not the most general ones. Thus, for
instance, it assumes a non vanishing electric potential gradient. If
the boundary conditions are such that this assumption is violated,
one has to analyze the spectrum of different linear differential
operators instead of \eqref{eq:17}, that are still of some physical
interest. 

Let then $\Omega\subseteq\R^d$ be (possibly) unbounded, $\partial\Omega\in C^{2,\alpha}$ for
some $\alpha>0$, $A= (A_1,\dots, A_d) \in C^2(\bar{\Omega},\R^d)$.  Let $B(x)$ denote the
anti-symmetric matrix associated with $\curl A$:
$$
B_{jk} = \pa_k A_j - \pa_j A_k\,.
$$
We attempt to prove existence of a non-empty spectrum, and when
possible, to prove completeness of the system of (generalized)
eigenvectors of the operator $-\Delta_A+V$ in $L^2(\Omega,\C)$, where $V$ may be
complex valued. Since in some of the examples below we use analytic
dilation, we consider a more general class of operators.  (Note that
analytical dilation can be applied, in general, only in domains that
are invariant under real dilation).  In particular, we consider here
the operator
\begin{equation} \label{modele2a}
  \A_\alpha= -e^{-i \frac{\pi}{m(k+1)}} \frac{\partial^2}{\partial x^2} -
  e^{i \frac{\pi}{2 (k+1)}}\Big(\frac{\partial}{\partial y}-i\frac{x^m}{m}\Big)^2 +  e^{i \frac{\pi}{2 (k+1)}} y^{2k} \,,
\end{equation}
obtained via analytic dilation of the operator:
\begin{equation} \label{modele2aa}
  \A_0= - \frac{\partial^2}{\partial x^2} -
  \Big(\frac{\partial}{\partial y}-i\frac{x^m}{m}\Big)^2 + i  y^{2k} \,.
\end{equation}

We begin by defining the class of operators considered in this work. Let
then
$$(\alpha_1,\ldots,\alpha_d)\in(-\frac \pi 4, \frac \pi 4)^d$$ and
\begin{equation}
\label{eq:3}
  K\overset{def}{=}\min_{1\leq k\leq d} \cos(2\alpha_k) > 0 \,.
\end{equation}
Suppose that 
\begin{displaymath}
  V= V_1+V_2
\end{displaymath}
where $V_1\in C^1(\overline{\Omega},\C)$ satisfies
\begin{equation}
  \label{eq:1}
\Re V_1 \geq -\lambda^* \,,
\end{equation}
for some $\lambda^*\in\R$. \\
Suppose further that there exists a constant $C$ such that
\begin{equation}
\label{eq:2}
  |\nabla V_1|+ \max_{(k,l)\in\{1,\ldots,d\}^2} |\nabla B_{k\ell}|
  \leq C\, m_{B,V_1}  \text{ in } \Omega \,,
\end{equation}
where
\begin{equation}
  m_{B,V_1} = \sqrt{|V_1|^2 +|B|^2+ 1}\,,
\end{equation}
Assume in addition that $V_2\in L^\infty_{loc}(\overline{\Omega},\C)$ is such that for every $\epsilon>0$, there exists $C_\epsilon$ for
which
\begin{equation}
\label{eq:4}
  |V_2| 
  \leq C_\epsilon + \epsilon  \,  m_{B,V_1}   \text{ in } \Omega \,.
\end{equation}
Finally, to assure compactness of the resolvent, we assume that
\begin{equation}
  \label{eq:5}
m_{B,V_1} \xrightarrow[|x|\to +\infty]{} +\infty \text{ in } \Omega\,.
\end{equation}
Consider then the operator initially defined on $C_0^\infty(\Omega)$ by:
  \begin{equation}
\label{eq:6}
  \PP = \PP_0 + V\,,
\end{equation}
where
\begin{equation}\label{defpp0} 
  \PP_0 =- \sum_{k=1}^d e^{2i\alpha_k}\partial_{A_k}^2
\end{equation}
and
\begin{displaymath}
\partial_{A_k}: =\frac{\partial}{\partial x_k} -iA_k\,.
\end{displaymath}
We consider here the Dirichlet realization $\PP^D$
  of $\PP$ in 
  $\Omega$, i.e.,  some closed extension of $\PP$ which should be defined
  properly  on a subspace of functions satisfying a Dirichlet
  condition on $\partial\Omega$, or on the entire space $\R^d$. Our
  definition, based on some generalization of the Friedrichs extension
  in a non necessarily coercive case, will coincide with the standard
  notion when $\Omega$ is bounded,  $\Omega=\mathbb R^d$ or when the
  operator is selfadjoint and semi-bounded. 

Before stating our main result, we recall that, if $\Hg$ is a Hilbert
space and $p>0$  the Schatten class  $C^p(H)$ denotes the set of compact
operators $\Tg$ such that
\begin{displaymath}
 \|\Tg\|_p= \Big(\sum_{n=1}^\infty \mu_n(\Tg)^p\Big)^{1/p}<\infty\,,
\end{displaymath}
where $\mu_n(\Tg)$ are the eigenvalues of $(\Tg^*\Tg)^{1/2}$ repeated
according to their multiplicity \cite{dusc63,goetal90}. For $1\leq p$, $C^p$
is a Banach space with $\|\, \cdot \, \|_p$ as its norm. For $0<p<1$
$C^p$ is still well defined, but is not a Banach space and $\|\textperiodcentered{} \|_p$
is not a norm.

Our main result follows
\begin{theorem}
\label{thm:1.1}
  Under the above assumptions we have, for the Dirichlet realizations
  of $\PP$ and $-\Delta_A+|V|$, that for every
  $\lambda\in\rho(-\Delta_A+|V|)\cap\rho(\PP)$
 \begin{equation}\label{eq:18}
(-\Delta_A+|V|-\lambda)^{-1}\in C^p(L^2(\Omega,\C)) \Rightarrow (\PP-\lambda)^{-1}\in C^p(L^2(\Omega,\C))  \,.
  \end{equation}
  \end{theorem}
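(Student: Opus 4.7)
The plan is to reduce the Schatten class property for $(\PP-\lambda)^{-1}$ to the one assumed for $(-\Delta_A+|V|-\lambda)^{-1}$ via a factorization $(\PP-\lambda)^{-1} = (-\Delta_A+|V|-\lambda)^{-1}\circ T$, where
\begin{displaymath}
T = (-\Delta_A+|V|-\lambda)(\PP-\lambda)^{-1}
\end{displaymath}
is to be shown bounded on $L^2(\Omega,\C)$. Once this is established, the general Schatten inequality $\mu_n(AB)\le \|B\|\mu_n(A)$ for $A$ compact and $B$ bounded yields $\mu_n((\PP-\lambda)^{-1})\le \|T\|\mu_n((-\Delta_A+|V|-\lambda)^{-1})$, so $(-\Delta_A+|V|-\lambda)^{-1}\in C^p$ forces $(\PP-\lambda)^{-1}\in C^p$ with $\|(\PP-\lambda)^{-1}\|_p\le\|T\|\cdot\|(-\Delta_A+|V|-\lambda)^{-1}\|_p$.

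The boundedness of $T$ is equivalent, after absorbing $\lambda$ into the lower order term, to the a priori estimate
\begin{equation*}
  \|\Delta_A u\|_2 + \||V|u\|_2 \le C\bigl(\|\PP u\|_2 + \|u\|_2\bigr)
  \qquad (u\in C_0^\infty(\Omega)),
\end{equation*}
which one then extends to $D(\PP^D)$ by the fact that $C_0^\infty$ is a core for the Friedrichs-type realization. My plan for this estimate is to expand $\|\PP u\|_2^2 = \|\PP_0 u\|_2^2 + 2\Re\langle \PP_0 u,Vu\rangle + \|Vu\|_2^2$ and integrate by parts on each piece. For $\|\PP_0 u\|_2^2$, successive integrations by parts produce, thanks to $K=\min_k \cos(2\alpha_k)>0$, a lower bound of the form $K^2\|\nabla_A^2 u\|_2^2$ modulo commutator terms involving $B$ and $\nabla B$; the identity $\|[\partial_{A_j},\partial_{A_k}]u\|^2 = \|B_{jk}u\|^2$ allows one to extract $\|Bu\|_2^2$ from $\|\nabla_A^2 u\|_2^2$. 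For the cross term $\Re\langle \PP_0 u,V_1 u\rangle$, integration by parts produces, besides the already useful term $\int (\cos(2\alpha_k)\Re V_1+\sin(2\alpha_k)\Im V_1)|\partial_{A_k}u|^2$, a remainder involving $\nabla V_1$, which is absorbed using assumption \eqref{eq:2} ($|\nabla V_1|\le C m_{B,V_1}$) together with Young's inequality and the bound $m_{B,V_1}^2\le |V_1|^2+|B|^2+1$. Finally, the $V_2$ contribution is controlled by choosing $\epsilon>0$ small in \eqref{eq:4} and absorbing $\epsilon \, m_{B,V_1}|u|$ into the already controlled $\|V_1 u\|_2 + \|Bu\|_2 + \|u\|_2$. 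This yields
\begin{displaymath}
  \|\PP u\|_2^2 + C_0\|u\|_2^2 \ge c\bigl(\|\Delta_A u\|_2^2 + \|V_1 u\|_2^2 + \|Bu\|_2^2 + \|u\|_2^2\bigr),
\end{displaymath}
from which the required estimate, with $|V|$ on the left-hand side, follows by the triangle inequality $\||V|u\|_2\le \|V_1 u\|_2 + \|V_2 u\|_2$ and (\ref{eq:4}) once more.

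The hard part will be the bookkeeping of commutators in the expansion of $\|\PP_0 u\|_2^2$ and in the cross term $\Re\langle \PP_0 u,Vu\rangle$: the appearance of $\nabla V_1$ and $\nabla B$ is unavoidable after integration by parts, and only the precise growth restriction \eqref{eq:2} makes them absorbable by $\|V_1 u\|_2 + \|B u\|_2 + \|u\|_2$ via a Young inequality with a small parameter. The positivity condition $K>0$ is what ensures that, after rotating the coefficients, the principal symbol remains uniformly elliptic so that $K^2\|\nabla_A^2 u\|_2^2$ survives as the dominant positive term; if any $|\alpha_k|$ approached $\pi/4$ the quadratic form would degenerate and this strategy would break down.
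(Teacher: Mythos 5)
Your factorization strategy is not the paper's argument, and as written it has a genuine gap. Boundedness of $T=(-\Delta_A+|V|-\lambda)(\PP-\lambda)^{-1}$ is exactly the operator-level separation estimate $\|\Delta_Au\|_2+\||V|u\|_2\leq C(\|\PP u\|_2+\|u\|_2)$ on $D(\PP^D)$ together with the domain inclusion $D(\PP^D)\subset D\big((-\Delta_A+|V|)^D\big)$, and your route to it fails on a general domain for two reasons. First, $C_0^\infty(\Omega)$ is \emph{not} a graph-norm core of the Dirichlet realization when $\partial\Omega\neq\emptyset$ (its graph closure is the minimal operator); the density available here is only that of $\Dg$, i.e.\ functions vanishing on $\partial\Omega$ whose normal derivatives do not vanish (Lemma \ref{lem:dens}). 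Second, and more seriously, for $u\in\Dg$ the double integration by parts in $\|\PP_0u\|_2^2$ produces boundary integrals of the type $\int_{\partial\Omega}\partial_{A_m}^2u\,\overline{\partial_{A_k}u}\,\nu_k\,ds$, which have no sign and are not controlled by the Dirichlet condition alone; in the paper such terms can be eliminated only for $\Omega=\R^d$ or $\R^d_+$, and even there only for $u\in\Dg_1$, i.e.\ when $\PP u$ also vanishes on the boundary (see \eqref{eq:46}, \eqref{eq:37}, \eqref{eq:39}). This is precisely why the operator-norm comparison appears in the paper only for the whole space and the half-space, whereas Theorem \ref{thm:1.1} is asserted for arbitrary domains: your scheme could at best recover the statement for $\Omega=\R^d$, essentially by running the Section \ref{sec:3.1} computations in the reverse direction. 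A further analytic weak point is the cross term: $\int(\cos 2\alpha_k\,\Re V_1+\sin 2\alpha_k\,\Im V_1)|\partial_{A_k}u|^2$ is not ``useful'' in general, since $\Im V_1$ is unbounded with no sign and $\Re V_1$ is only bounded below, so this is an indefinite term of size $\int|V_1||\nabla_Au|^2$ whose absorption requires an estimate of the type \eqref{eq:44} and is delicate (naive Cauchy--Schwarz produces a large constant in front of $\|\Delta_Au\|_2^2$, not a small one).

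The paper avoids all of this by working at the level of quadratic forms rather than operator domains: from the coercivity estimate \eqref{estima} one gets $\langle(-\Delta_A+m_{B,V_1})u,u\rangle\leq C(\|\PP u\|_2\|u\|_2+\|u\|_2^2)$ for all $u\in D(\PP)$, and then a min--max comparison (Proposition 11.9 in \cite{hel13}) yields $\nu_n\leq C(1+\mu_n)$ for the singular values (Proposition \ref{lem:3.1}); since $-\Delta_A+m_{B,V_1}$ and $-\Delta_A+|V|+1$ have comparable form domains, this gives \eqref{eq:18} directly, needing only $D(\PP)\subset\Vg$ and no second-order a priori estimate, no core argument, and no domain inclusion. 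If you want to salvage your approach, you must either restrict to $\Omega=\R^d$ (or prove the boundary terms harmless in general $C^{2,\alpha}$ domains, which is an open issue here), or replace the operator-level estimate by the form-level one, which is in effect the paper's proof.
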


  The optimality of this result is also of interest. In this direction
  we prove the following
  \begin{theorem}
  With the notation and assumptions of the previous theorem,
 \begin{itemize}
  \item When $\Omega$ is either the entire space or the half-space in
    $\R^d$ we have
\begin{equation}
  (\PP-\lambda)^{-1}\in C^p(L^2(\Omega,\C)) \Rightarrow (-\Delta_A+|V|-\lambda)^{-1}\in C^p(L^2(\Omega,\C))
\end{equation}
 for every   $\lambda\in\rho(-\Delta_A+|V|)\cap\rho(\PP)$.
\item For a general domain $\Omega$ and for any $\alpha\in[-\pi,\pi)$, the Dirichlet realizations
  of  $-e^{i\alpha}\Delta_A+V$ and $-\Delta_A+|V|$ satisfy
  \begin{equation}
  (-e^{i\alpha}\Delta_A+V-\lambda)^{-1}\in C^p(L^2(\Omega,\C)) \Rightarrow (-\Delta_A+|V|-\lambda)^{-1}\in C^p(L^2(\Omega,\C))\,,
\end{equation}
for every
  $\lambda\in\rho(-\Delta_A+|V|)\cap\rho(-e^{i\alpha}\Delta_A+V)$.
\end{itemize}
\end{theorem}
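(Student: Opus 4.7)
The plan is to reverse the implication of Theorem~\ref{thm:1.1} in each of the two settings. In both parts the key tool is the min--max characterization of the singular values of a closed operator $T$ with compact resolvent,
\[
\mu_n(T) \;=\; \min_{\substack{F \subset \mathrm{dom}(T)\\ \dim F = n}}\ \max_{u \in F \setminus \{0\}}\ \frac{\|Tu\|}{\|u\|},
\]
together with the identity $s_n(T^{-1}) = 1/\mu_n(T)$. Consequently, a norm comparison of the form
\[
\|T u\| \;\leq\; C\,\|(-\Delta_A + |V|)\,u\| + C'\,\|u\| \qquad (\ast)
\]
on a common form-core yields $\mu_n(T) \leq C\,\mu_n(-\Delta_A+|V|)+C'$, whence $s_n((-\Delta_A+|V|)^{-1}) \leq C''\, s_n(T^{-1})$ for $n$ large, from which the desired Schatten-$p$ implication for the resolvents follows after a harmless shift of the spectral parameter.

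For Part~2, with $T = -e^{i\alpha}\Delta_A + V$, the triangle inequality gives $\|Tu\|^2 \leq 2\|(-\Delta_A)u\|^2 + 2\||V|u\|^2$. Expanding the square yields
\[
\|(-\Delta_A+|V|)u\|^2 \;=\; \|(-\Delta_A)u\|^2 + \||V|u\|^2 + 2\,\mathrm{Re}\langle (-\Delta_A)u,|V|u\rangle,
\]
and integration by parts rewrites the cross term as $2\int|V||\nabla_A u|^2 + 2\,\mathrm{Re}\int u\,\overline{\nabla_A u}\cdot\nabla|V|$; the first summand is non-negative, and the second is absorbed via the gradient bound \eqref{eq:2} on $\nabla V_1$ (using $|\nabla|V|| \leq |\nabla V|$ a.e.), the relative smallness \eqref{eq:4} of $V_2$, and the magnetic-bottle estimate $\|m_{B,V_1}^{1/2}u\|^2 \leq C(\|(-\Delta_A+|V|)u\|+\|u\|)\,\|u\|$ already available in the course of proving Theorem~\ref{thm:1.1}. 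This delivers $(\ast)$.

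For Part~1, with $T = \PP = \PP_0 + V$ and $\PP_0 = -\sum_k e^{2i\alpha_k}\partial_{A_k}^2$, the direction-dependent phases prevent us from writing $\PP_0$ as a scalar multiple of $-\Delta_A$. On $\R^d$ or $\R^d_+$, however, each $\partial_{A_k}^2$ is essentially a one-dimensional operator in the single variable $x_k$, and the triangle inequality $\|\PP_0 u\|^2 \leq d\sum_k \|\partial_{A_k}^2 u\|^2$, combined with the elliptic regularity estimate $\sum_k\|\partial_{A_k}^2 u\|^2 \leq C(\|\Delta_A u\|^2 + \|u\|^2)$ valid on these domains with Dirichlet condition, reduces the problem to a coordinate-wise analog of Part~2, which goes through by the same computation.

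The main obstacle in both parts is the treatment of the cross and commutator terms arising from the non-commutation of $-\Delta_A$ with $|V|$, compounded by the low regularity of $|V| = |V_1+V_2|$: the hypotheses supply the pointwise bound $|\nabla V_1| \leq C\,m_{B,V_1}$ but only $L^\infty_{\mathrm{loc}}$ control of $V_2$. A regularization-and-limit procedure, absorbing $V_2$-derivative terms as $\varepsilon\|(-\Delta_A+|V|)u\|^2 + C_\varepsilon\|u\|^2$ via \eqref{eq:4}, will be needed to justify the integration by parts. The restriction to $\R^d$ and $\R^d_+$ in Part~1 reflects the failure, on general domains, of the coordinate decomposition of $\PP_0$ used to reduce to the selfadjoint comparison.
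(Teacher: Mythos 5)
Your Part 2 is essentially the paper's own argument (Lemma \ref{lem:opt}): expand $\|(-\Delta_A+|V|)u\|_2^2$, control the cross term by integration by parts using \eqref{eq:2}, \eqref{eq:4} and the auxiliary bound of Lemma \ref{lem:aux}, and conclude by a min--max comparison of the singular values on a dense class of compactly supported functions (Lemma \ref{lem:dens}). One small remark: the regularization of $V_2$ you announce is unnecessary — the paper never differentiates $|V|$; it writes $2\Re\langle(|V_1+V_2|-|V_1|)u,\Delta_Au\rangle\leq 2\||V_2|u\|_2\|\Delta_Au\|_2$ and integrates by parts only with $|V_1|$, which is $C^1$.

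Part 1, however, has a genuine gap. The inequality you invoke as ``elliptic regularity'', $\sum_k\|\partial_{A_k}^2u\|_2^2\leq C(\|\Delta_Au\|_2^2+\|u\|_2^2)$, is not available under the hypotheses of the theorem: comparing mixed second derivatives with $\|\Delta_Au\|_2^2$ produces commutator terms of the form $\langle B_{km}\partial_{A_m}u,\partial_{A_k}u\rangle$ and $\langle u\,\partial_mB_{km},\partial_{A_k}u\rangle$, and \eqref{eq:2} only bounds $|\nabla B|$ by $m_{B,V_1}$, which may be dominated by $|V_1|$; so the right-hand side must also contain $\|Vu\|_2^2$ (this is exactly why the paper's intermediate estimates \eqref{eq:19}, \eqref{eq:44} and \eqref{eq:15} carry the potential, which is then absorbed via \eqref{eq:48}). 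Moreover, this second-derivative estimate is precisely the heart of the half-space case and cannot be dismissed as a ``coordinate-wise analog of Part 2'': Part 2 contains no bound on second derivatives, and proving the estimate on $\R^d_+$ requires the double integration by parts with an analysis of the boundary terms — the paper kills the term $\int_{\partial\Omega}\partial_{A_d}^2u\,\overline{\partial_{A_d}u}\,ds$ by working on $\Dg_1=\{u\in\Dg\,|\,\PP u\in\Dg\}$, where $\PP_0u|_{\partial\Omega}=0$ together with $u|_{\partial\Omega}=0$ forces $\partial_d^2u|_{\partial\Omega}=0$ (see \eqref{eq:37}) — and then a separate density argument showing that $\Dg_1$ is dense in the graph norm of $-\Delta_A+|V|$ (the paper's corollary uses Lemma \ref{lem:dens} together with statement 2 of Theorem \ref{LaxMilgramv2} applied to $(-\Delta_A+|V|)^2$). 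Your proposal supplies neither the corrected estimate nor the boundary/density analysis, and the ``common form-core'' you mention is in any case insufficient: the min--max comparison via $(\ast)$ needs a subspace dense in the \emph{operator} graph norm of $-\Delta_A+|V|$ on which $(\ast)$ holds.
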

In this last case, we consequently have  an equivalence, in the sense of
Schatten classes of the resolvents, between the Dirichlet realizations of $\PP$ and
$-\Delta_A+|V|$. \\
The complementary question is, naturally,  to which Schatten
class does the resolvent of $-\Delta_A+|V|$ belong? The following
theorem provides a satisfactory answer.
\begin{theorem}
\label{prop:1.2}
 Suppose that for some $p>0$
 \begin{equation}
\label{eq:26}
    \int_{\Omega\times  \R^d} (|\xi|^2 + m_{B,V_1})^{-p } dx d\xi <\infty \,.
 \end{equation}
Then, $ (-\Delta_A+|V|+1)^{-1}\in C^p(L^2(\Omega,\C))$.
\end{theorem}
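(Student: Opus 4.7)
The plan is to dominate $(-\Delta_A + |V| + 1)^{-1}$ in Schatten class by the scalar, non-magnetic model $(-\Delta + m_{B,V_1} + 1)^{-1}$, whose Schatten class is read off directly from the phase-space integral \eqref{eq:26} via a semiclassical Weyl bound. The first and most substantial step is to establish a coercive lower bound on $C_0^\infty(\Omega)$ of the form
\begin{equation*}
-\Delta_A + |V| + C_1 \;\geq\; c_1 \bigl(-\Delta_A + m_{B,V_1}\bigr)
\end{equation*}
for some $c_1, C_1 > 0$. For the $V_1$ part, the slow-variation hypothesis \eqref{eq:2} is precisely what is needed for a magnetic Fefferman-Phong-type inequality $\langle (-\Delta_A + |V_1|) u, u\rangle + C\|u\|^2 \geq c \langle m_{B,V_1} u, u\rangle$: one partitions $\Omega$ into cubes of side $\sim m_{B,V_1}^{-1/2}$ (on which $|B|$ and $|V_1|$ are essentially constant by \eqref{eq:2}), and on each cube combines the magnetic uncertainty estimate $\|\nabla_A u\|^2 \gtrsim |B|\|u\|^2$ with $\langle |V_1| u, u\rangle = |V_1|\|u\|^2$, boundary errors being controlled by an IMS-type localization. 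The $V_2$ contribution is then absorbed using \eqref{eq:4} with $\epsilon$ small enough.

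Taking inverses of both positive self-adjoint sides and exploiting $-\Delta_A + |V| + 1 \geq 1$ to absorb the additive shift from $+C_1$ to $+1$ into a multiplicative constant, the previous inequality yields
\begin{equation*}
0 \leq (-\Delta_A + |V| + 1)^{-1} \leq C_2\,(-\Delta_A + m_{B,V_1} + 1)^{-1};
\end{equation*}
by the min-max principle it therefore suffices to show that the model operator lies in $C^p$. The diamagnetic inequality $|e^{-t(-\Delta_A + W)}(x,y)| \leq e^{-t(-\Delta + W)}(x,y)$ (with $W = m_{B,V_1}$), evaluated on the diagonal and combined with the subordination formula $\mu^{-p} = \Gamma(p)^{-1} \int_0^\infty t^{p-1} e^{-t\mu}\,dt$, yields $\text{tr}((-\Delta_A + W + 1)^{-p}) \leq \text{tr}((-\Delta + W + 1)^{-p})$ for every $p > 0$, reducing further to the scalar Dirichlet operator. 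Finally, Dirichlet-Neumann bracketing for $-\Delta + W + 1$ on $\Omega$ gives the semiclassical counting bound $N(\lambda) \leq C_d \,\text{vol}\{(x,\xi) : |\xi|^2 + W(x) + 1 < \lambda\}$; combining with the layer-cake identity $\text{tr}(H^{-p}) = p\int_0^\infty \lambda^{-p-1} N(\lambda)\,d\lambda$ and applying Fubini yields
\begin{equation*}
\text{tr}\bigl((-\Delta + W + 1)^{-p}\bigr) \leq C_{d,p} \int_{\Omega \times \R^d} (|\xi|^2 + m_{B,V_1})^{-p}\, dx\, d\xi,
\end{equation*}
which is finite by \eqref{eq:26} (using $m_{B,V_1} \geq 1$ to absorb the harmless $+1$).

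The main technical obstacle is the magnetic Fefferman-Phong bound of the first step, which genuinely uses the slow-variation hypothesis \eqref{eq:2}; the subsequent operator comparison, diamagnetic trace domination, and semiclassical Weyl estimate are essentially formal once that coercivity estimate is established.
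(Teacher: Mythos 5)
Your overall strategy --- compare $-\Delta_A+|V|+1$ with the model $-\Delta_A+m_{B,V_1}$ through a quadratic-form inequality, then bound the Schatten norm of the selfadjoint model by the phase-space integral --- is the same as the paper's, but you execute both halves differently. For the first half the paper does not need a Fefferman--Phong argument: the inequality $\langle(-\Delta_A+m_{B,V_1})u,u\rangle\le \tilde C\langle(-\Delta_A+|V|+1)u,u\rangle$ is already available from \eqref{equiv}, proved in Section 2 via the commutator identity \eqref{eq:17a}, writing $\int_\Omega B_{k\ell}^2\,m_{B,V_1}^{-1}|u|^2\,dx$ as $i\int_\Omega\{[\partial_{A_k},\partial_{A_\ell}]u\}\,B_{k\ell}m_{B,V_1}^{-1}\bar u\,dx$, integrating by parts and using the slow-variation hypothesis \eqref{eq:2} (see \eqref{2.18an}); $V_2$ is absorbed via \eqref{eq:4} exactly as you propose. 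For the second half the paper quotes Combes--Seiler--Schrader, whose Theorem 2.1 is essentially the diamagnetic-plus-Golden--Thompson bound you are reproving, and treats a general $\Omega$ by extending $m_{B,V_1}$ to $\R^d$ by a confining potential $\rho(1+x^2)^M$, comparing Dirichlet eigenvalues in $\Omega$ with whole-space eigenvalues and letting $\rho\to\infty$; your idea of staying on $\Omega$ throughout (diamagnetic domination of the Dirichlet semigroups plus subordination, then a Weyl-type bound for the scalar Dirichlet operator) is a legitimate and arguably cleaner alternative.

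Two of your steps, however, do not work as written. (i) In the cube argument, with cubes of side $\sim m_{B,V_1}^{-1/2}$ the IMS localization error $\sum_j\||\nabla\chi_j|u\|_2^2$ is itself of order $\int m_{B,V_1}|u|^2\,dx$, i.e.\ the same size as the quantity you are trying to produce, so the argument does not close at that scale. One must either work at scale $L|B|^{-1/2}$ with $L$ large in the region where $|B|$ dominates $m_{B,V_1}$ (where $|V_1|$ dominates no kinetic energy is needed at all) and track constants carefully, or simply use the commutator identity, which is exactly what \eqref{eq:2} is tailored for. (ii) The counting bound $N(\lambda)\le C_d\,{\rm vol}\{(x,\xi):|\xi|^2+W(x)+1<\lambda\}$ is not a consequence of Dirichlet--Neumann bracketing alone; it is a CLR-type bound, valid for $d\ge3$ but false in general for $d=1,2$ --- and the paper's applications are one- and two-dimensional. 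Bracketing with $W$-adapted cubes yields an extra additive term of roughly one state per cube, of size ${\rm vol}\{x:W(x)+1<\lambda\}$, which your layer-cake integral absorbs only after noting that $W=m_{B,V_1}\ge1$ and that \eqref{eq:26} forces $p>d/2$, so $\int_\Omega W^{d/2-p}dx<\infty$ controls it; alternatively, replace this step by Golden--Thompson, ${\rm tr}\,e^{-t(-\Delta^D+W)}\le(4\pi t)^{-d/2}\int_\Omega e^{-tW}dx$, combined with your subordination formula, which gives the trace bound directly in every dimension. With these two repairs your argument is complete.
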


Once the Schatten class for a compact operator has been obtained, one
can use the following fundamental result in operator theory to prove
completeness of its system of eigenvectors  (cf. for instance Theorem X.3.1 in
\cite{goetal90}, or Corollary XI.9.31 in \cite{dusc63}).
\begin{theorem}
\label{thm:1.4}
 Let $\Hg$ denote a Hilbert space, and $\A\in C^p(\Hg)$ be a compact operator 
for some $p > 0$. Assume that its numerical range
\begin{displaymath}
  W_{\A} = \{ \langle\A\varphi,\varphi\rangle \,| \, \varphi\in\Hg \,, \|\varphi\|=1\,\} 
\end{displaymath}
lies inside a closed angle with vertex at zero and opening $\pi/p\,$. \\
Let ${\rm Span}\,(\A)$ denote the closure of the vector space
generated by the generalized eigenfunctions. Then, ${\rm Span}\,(\A)=\Hg$ 
is complete.
\end{theorem}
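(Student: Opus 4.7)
The statement is the classical Keldysh-Lidskii-Gohberg-Krein completeness theorem for compact operators in a Schatten class whose numerical range lies in a sector of small enough opening; my plan is to follow the standard route via regularized Fredholm determinants combined with a Phragm\'en-Lindel\"of argument.  As a preliminary normalization, I would rotate $\A$ (which rotates $W_\A$ accordingly) so that $W_\A$ is contained in the symmetric sector $\Sigma=\{w:|\arg w|\le \pi/(2p)\}\cup\{0\}$ of opening exactly $\pi/p$.  The inclusion $W_\A\subset\Sigma$ then yields the resolvent estimate $\|(\A-\mu)^{-1}\|\le 1/\dist(\mu,\Sigma)$ for $\mu\notin \Sigma$; translated to the variable $z=1/\mu$, this gives a uniform bound for $(I-z\A)^{-1}$ on every ray lying outside $\Sigma$, and in particular on the two boundary rays of $\Sigma$.

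Next I would construct the regularized Fredholm determinant $D(z)$ associated with $I-z\A$.  By the standard theory of such determinants for $\A\in C^p$, the function $D$ is entire of order at most $p$ and of minimal type along rays, its zero set coincides with the reciprocals $\{1/\lambda_k\}$ of the non-zero eigenvalues of $\A$ (counted with algebraic multiplicity), and the operator-valued function $z\mapsto D(z)(I-z\A)^{-1}$ is itself entire, of order at most $p$, with controlled type.

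The core step is a contradiction argument.  Suppose $M:=\mathrm{Span}(\A)$ is a proper closed subspace of $\Hg$.  Since $M$ is $\A$-invariant, the compression of $\A$ to $M^\perp$ defines a compact operator $\A'\in C^p(M^\perp)$, inheriting the sectorial numerical range $W_{\A'}\subset W_\A\subset \Sigma$; by construction its spectrum is reduced to $\{0\}$, because all non-zero generalized eigenvectors of $\A$ lie in $M$.  For any $u,v\in M^\perp$ the scalar function $\varphi(z):=\langle (I-z\A')^{-1}u,v\rangle$ is then entire (quasinilpotence of $\A'$ rules out poles), of order at most $p$, uniformly bounded on the boundary rays of $\Sigma$ by the resolvent estimate, and of minimal type on these rays by the determinant bound.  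Applying the critical Phragm\'en-Lindel\"of principle inside $\Sigma$ (and in its complement, subdivided into sub-sectors of opening $\le \pi/p$) forces $\varphi$ to be bounded, hence constant by Liouville, so $\varphi(z)\equiv\langle u,v\rangle$; this ultimately forces $\A'=0$, and then the classical Riesz-decomposition argument---with the generalized kernel of $\A$ absorbed into the invariant subspace $M$---gives $M^\perp=\{0\}$, contradicting the hypothesis.

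The main obstacle is the pairing $p\cdot(\pi/p)=\pi$, which places us squarely in the \emph{borderline} case of Phragm\'en-Lindel\"of: for a general entire function of order exactly $p$, boundedness on the boundary of a sector of opening $\pi/p$ does not imply boundedness inside (the function $e^{z^p}$ is the classical obstruction in $\Sigma$).  The regularized determinant $D$ is precisely what overcomes this obstruction: because $\A\in C^p$, $D$ has \emph{minimal type}, and the quotient representation of $(I-z\A)^{-1}$ in terms of $D$ transfers this minimal-type property to the scalar function $\varphi$, legitimating the Phragm\'en-Lindel\"of step at the critical exponent.  Since the theorem is cited from the standard references \cite{goetal90, dusc63}, the paper invokes it directly; the plan above sketches the strategy of the classical self-contained proof.
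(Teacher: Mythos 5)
The paper does not actually prove Theorem~\ref{thm:1.4}: it is quoted verbatim from the classical literature (Theorem X.3.1 in \cite{goetal90}, Corollary XI.9.31 in \cite{dusc63}), so the only benchmark is the classical Keldysh--Lidskii--Gohberg--Krein proof that you are sketching, and in outline you do follow it (reduction to a quasinilpotent compression of class $C^p$ on $M^\perp$, regularized determinants, minimal type of order $p$, Phragm\'en--Lindel\"of). However, your sketch has a genuine gap at its central analytic step. The numerical-range estimate $\|(\A'-\mu)^{-1}\|\le 1/\dist(\mu,\overline{W_{\A'}})$ gives you nothing on the two boundary rays of $\Sigma$: there $\dist(\mu,\Sigma)=0$, and $\overline{W_{\A'}}$ may itself touch (or accumulate on) those rays, so the asserted ``uniform bound for $(I-z\A)^{-1}$ \ldots in particular on the two boundary rays of $\Sigma$'' is unjustified --- in fact false in general. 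What the estimate does give is a bound on each ray at angular distance $\gamma>0$ from the closed angle, of size $O(1/\sin\gamma)$, blowing up as $\gamma\to0$. Consequently the entire function $\varphi$ is only known to be bounded on rays strictly outside the closed critical sector, and your Phragm\'en--Lindel\"of step would have to be run across a sector of opening strictly larger than $\pi/p$, where order $p$ together with minimal type no longer suffices (the minimal-type refinement of Phragm\'en--Lindel\"of rescues exactly the case of opening equal to $\pi/p$ \emph{with} bounds on its two bounding rays, which is precisely what you do not have). This borderline issue is the genuinely hard point of the theorem at the critical opening $\pi/p$, and the proofs in \cite{goetal90,dusc63} handle it by a finer argument than the one you describe; as written, your argument closes only under the stronger hypothesis that $W_\A$ lies in an angle of opening strictly less than $\pi/p$.

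Two smaller points. First, the claim that the compression $\A'$ has spectrum $\{0\}$ is correct but needs the standard justification (e.g.\ that $M$ is invariant under the resolvent, so the Riesz projection of the compression at any $\lambda\neq0$ is the compression of that of $\A$, whose range lies in $M$); ``all non-zero generalized eigenvectors lie in $M$'' is not by itself a proof. Second, the concluding step ``$\A'=0$, and then the classical Riesz-decomposition argument gives $M^\perp=\{0\}$'' is too quick: vanishing of the compression does not make the vectors of $M^\perp$ root vectors of $\A$. The standard repairs are either to run the whole argument on the restriction of $\A^*$ to the invariant subspace $M^\perp$ (quasinilpotent for the same reason, with numerical range in the reflected angle), which yields $M^\perp\subset\ker \A^*$, and then to use the cone property of $W_\A$ to conclude $\A u=0$ for $u\in M^\perp$, hence $M^\perp\subset\ker\A\subset M$; some argument of this kind must be supplied.
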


We apply this statement to the resolvent of $\mathcal P$. Note that
\begin{multline*}
  W_{(\PP-\lambda)^{-1}} = \{ \langle(\PP-\lambda)^{-1}\varphi,\varphi\rangle \,| \, \varphi\in\Hg \,,
  \|\varphi\|=1\,\} \\ = \{  \langle\psi, (\PP-\lambda) \psi\rangle \,| \, \psi\in D(\PP)\,,
  \|(\PP-\lambda)\psi \|=1\,\} \,.
\end{multline*}
Hence, if for some $\lambda\in\rho(\PP)$, $W_{\PP-\lambda}$  lies in a closed angle
with vertex at zero and opening $\pi/p$, then so does $W_{(\PP-\lambda)^{-1}}$ and  it would follow immediately
that $\PP$ has a complete system of eigenfunctions.

The rest of the contribution is arranged as follows.\\
In the next section, we define the Dirichlet realization of $\PP$
and prove that its resolvent is compact.\\
In Section \ref{sec:3}  we prove Theorems
\ref{thm:1.1}-\ref{prop:1.2}. \\
Finally, in the last section, we use these results, together with
Theorem~\ref{thm:1.4} to prove completeness of the system of
(generalized)-eigenvectors, or at least existence of a non-empty
spectrum, for a few particular cases of \eqref{eq:6} motivated by
superconductivity problems.

\section{Definition of the Dirichlet realization}
\label{sec:2}
\subsection{Preliminaries}
As $\PP$ is defined by \eqref{eq:6} for smooth functions only, we seek
a closed extension $\PP^D$  corresponding to its  Dirichlet realization. For unbounded domains the definition of 
this extension deserves special attention. We thus consider 
 the sesquilinear form
 \begin{equation}
 (u,v) \mapsto a(u,v):= \sum_{k=1}^d \exp\{- 2i\alpha_k\} \langle \partial_{A_k} u\,,\, \partial_{A_k} v\rangle_{L^2(\Omega)}
  + \int_\Omega V(x) u(x) \bar v (x) \,dx
 \end{equation}
  initially defined on $C_0^\infty (\Omega)\times C_0^\infty (\Omega)$.\\
As is the common practice in such cases, it is useful to consider
instead, for some, sufficiently large,  $\gamma >0$
\begin{equation}\label{defa}
\begin{array}{ll}
 (u,v) \mapsto a_\gamma (u,v)&:=\sum_k \exp \{- 2i \alpha_k\} \langle \partial_{A_k} u\,,\, \partial_{A_k}
 v\rangle_{L^2(\Omega)} \\ & \qquad   + \int_\Omega V(x) u(x) \bar v (x) \,dx + \gamma \int
 u \bar v dx\,, 
 \end{array}
  \end{equation}
 to assure some coercivity.

 The Friedrichs extension of \eqref{defa} is a continuous sesquilinear
 form on $\Vg\times \Vg$, where
 \begin{equation}\label{defvg}
  \Vg = \{ u\in H^1_{0,A}(\Omega ) \, | \, |V_1|^{1/2}u\in L^2(\Omega,\C) \} \,,
\end{equation}
and $H^1_{0,A}(\Omega )$ denotes the closure of $C_0^\infty(\Omega)$ in the magnetic Sobolev space 
\begin{equation}
H^1_A(\Omega) = \{ u\in L^2(\Omega)\,,\, \nabla_Au \in L^2(\Omega)\}\,.
\end{equation}
When $\Omega$ is bounded, $\alpha_k=0$ for all $1\leq k\leq d$, and $V$ is real, 
the Dirichlet realization of $\PP$ can be easily obtained by applying
the Lax-Milgram Theorem (and the Friedrichs extension construction). When $V$ is complex valued and $\Im V\geq 0$,
the same method prevails possible if we employ a minor generalization
of the Lax Milgram Theorem, where hermitianity for the sesquilinear
form is no longer assumed \cite{AHP2}. When $\Im V$ has no definite sign
and is not bounded by $\Re V$ or the magnetic field, a more elaborate
generalization of the Lax-Milgram Theorem is needed.  In particular,
it is a necessary to replace the standard requirement for $\mathcal
V$-ellipticity (or coercivity) of \eqref{defa} by a weaker one.  This
is the object of the next subsection.

\subsection{A generalized Lax-Milgram Theorem}
 Let $\Vg$ denote a Hilbert space. Consider a continuous sesquilinear
form $a$ defined on $\Vg\times \Vg$:
\begin{displaymath}
  (u,v)\mapsto a(u,v)\,.
\end{displaymath}
 Recall that for a sesquilinear form continuity means that
for some $C>0$
\begin{equation}\label{lm1}
|a (u,v) |\leq C\;  \|u\|_\Vg\,  \| v \|_\Vg\,,\, \forall u, v \in \Vg\,.
\end{equation}
We denote the associated linear map by $\A\in \Lg(\Vg)$, i.e.,
\begin{equation}\label{lm2}
a(u,v) = \langle \A u\,,\,v\rangle_\Vg\,.
\end{equation}

\begin{theorem} \label{LMn}~\\
Let $a$ be a continuous sesquilinear form on $\Vg\times \Vg$. If $a$ satisfies,
for some $\Phi_1, \Phi_2 \in \mathcal L (\Vg)$
\begin{equation}
\label{lm5n}
|a(u,u)| + |a(u, \Phi_1(u))| \geq \alpha\, \|u\|_\Vg^2\,,\, \forall u\in \Vg\,.
\end{equation}
\begin{equation}
\label{lm5ne}
|a(u,u)| + |a( \Phi_2(u),u)| \geq \alpha\, \|u\|_\Vg^2\,,\, \forall u\in \Vg\,.
\end{equation}
then $\A$, as defined in \eqref{lm2}, is a continous isomorphism from
$\Vg$ onto $\Vg$. Moreover $\A^{-1}$ is continuous.
\end{theorem}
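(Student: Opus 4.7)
The plan is to show that $\A$ is bounded below (giving injectivity and closed range) and that $\A^*$ has trivial kernel (giving dense range), and then conclude. The two hypotheses \eqref{lm5n} and \eqref{lm5ne} play asymmetric roles: the first controls $\A$ from below directly, while the second is tailored to rule out $\ker \A^*$.

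First I would record that \eqref{lm1} and the Riesz representation theorem give a well-defined operator $\A\in\Lg(\Vg)$ satisfying $\|\A\|_{\Lg(\Vg)}\leq C$, so that \eqref{lm2} makes sense. Next, applying Cauchy--Schwarz to \eqref{lm5n} via \eqref{lm2},
\begin{equation*}
|a(u,u)| \leq \|\A u\|_\Vg\,\|u\|_\Vg,\qquad |a(u,\Phi_1(u))|\leq \|\A u\|_\Vg\,\|\Phi_1\|_{\Lg(\Vg)}\,\|u\|_\Vg,
\end{equation*}
yields
\begin{equation*}
\|\A u\|_\Vg \geq \frac{\alpha}{1+\|\Phi_1\|_{\Lg(\Vg)}}\,\|u\|_\Vg\,,\quad \forall u\in\Vg\,.
\end{equation*}
This lower bound shows that $\A$ is injective and that $\mathrm{Range}(\A)$ is closed in $\Vg$.

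Second, I would establish surjectivity by proving $\mathrm{Range}(\A)=\Vg$. Since the range is already closed, it suffices to show it is dense, equivalently that its orthogonal complement $\ker\A^*$ is trivial (where $\A^*$ denotes the Hilbert adjoint of $\A$ on $\Vg$). Let $w\in\Vg$ with $\A^* w=0$. Then for every $u\in\Vg$,
\begin{equation*}
a(u,w)=\langle \A u,w\rangle_\Vg=\langle u,\A^* w\rangle_\Vg =0\,.
\end{equation*}
Specializing to $u=w$ and $u=\Phi_2(w)$, both terms on the left of \eqref{lm5ne} vanish, forcing $\alpha\|w\|_\Vg^2\leq 0$ and thus $w=0$.

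Combining the two steps, $\A$ is a continuous bijection, and the lower bound from the first step gives directly $\|\A^{-1}\|_{\Lg(\Vg)}\leq (1+\|\Phi_1\|_{\Lg(\Vg)})/\alpha$, so that $\A^{-1}$ is continuous (no need to invoke the open mapping theorem separately). There is no serious obstacle here; the only conceptual point is recognizing that one cannot derive surjectivity from a single one-sided coercivity estimate in the non-Hermitian case, and that the second hypothesis is precisely what is needed to kill $\ker\A^*$. If the form were Hermitian, \eqref{lm5ne} would follow from \eqref{lm5n} (with $\Phi_2=\Phi_1^*$), recovering the classical Lions/Lax--Milgram picture.
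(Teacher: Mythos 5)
Your proof is correct and follows essentially the same route as the paper's: establish a lower bound $\|\A u\|_\Vg\geq\tilde\alpha\|u\|_\Vg$ from \eqref{lm5n} (injectivity and closed range), then use \eqref{lm5ne} with the choices $v=u$ and $v=\Phi_2(u)$ to show that any $w$ orthogonal to $\mathrm{Range}(\A)$ must vanish, i.e.\ $\ker\A^*=\{0\}$. The only cosmetic difference is that you phrase the second step explicitly through the Hilbert adjoint $\A^*$, whereas the paper works directly with $\langle\A v,u\rangle_\Vg=0$ for all $v$; these are the same computation.
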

\begin{proof}
We split the proof into two different steps.
\\

{\bf Step 1}: $\A$ is injective, and has a closed range.\\

Choose $u\in \Vg$, such that $\A u=0$. This implies
\begin{equation}\label{lm6n}
 \langle \A u , u\rangle =0 \mbox{  and } \langle \A u , \Phi_1 (u) \rangle =0\,.
 \end{equation}
 It, however, follows from \eqref{lm5n}  that 
\begin{displaymath}
|\langle \A u\,,\,u\rangle _\Vg|  + |\langle \A u\,,\,\Phi_1(u) \rangle _\Vg| \geq  \alpha \|u\|_\Vg^2\,,\, \forall u\in \Vg\,.
\end{displaymath}
Hence,
\begin{displaymath}
  (1+\|\Phi_1\|)\| \A u \|_\Vg\cdot \| u\|_\Vg \geq  \alpha\, \|u\|_\Vg^2\,,\, \forall u\in \Vg\,,
\end{displaymath}
and consequently, for some $\tilde{\alpha}>0$
\begin{equation}\label{lm7}
\| \A u \|_\Vg  \geq  \tilde{\alpha} \, \|u\|_\Vg \,,\, \forall u\in \Vg\,,
\end{equation}
from which injectivity readily follows. Closedness of the range easily
follows from \eqref{lm7} and the continuity of $\A\,$.

{\bf Step 2}: $\A(\Vg)$ is dense in $\Vg$, and $\A^{-1}$ is continuous.\\

Consider $u\in \Vg$ such that $\langle \A v\,,\,u \rangle_\Vg=0\,,\, \forall v\in \Vg$. In
particular, we can choose $v_1=u$ and $v_2=\Phi_2(u)$ to obtain
$a(u,u)=0$ and $a(\Phi_2(u),u)=0$. Hence, by (\ref{lm5ne}) we must have
$u=0$. Thus, $\A$ is a bijection, $\A^{-1}:\Vg\to\Vg$ exists and is continous
by \eqref{lm7}.
\end{proof}

We now consider two Hilbert spaces $\Vg$ and $\Hg$ such that $\Vg\subset
\Hg$, and that for some $C>0$ and any $u\in V$, we have
\begin{equation}
  \label{eq:7}
 \|u\|_\Hg \leq C \|u\|_{\Vg}\,.
\end{equation}
Suppose further that
\begin{equation}\label{densi}
\Vg \mbox{ is dense in } \Hg\,.
\end{equation}
Let
\begin{equation}\label{lm3}
D(S) =\{u\in \Vg\;|\; v\mapsto a(u,v) \mbox{ is continuous on } \Vg \mbox{ in the
  norm of } \Hg\}.
\end{equation}
We can now define the operator $S:D(S)\to\Hg$ by
\begin{equation}
\label{lm4}
a(u,v) = \langle Su\,,\,v \rangle_\Hg\,,\, \forall u \in D(S) \mbox{ and } \forall v\in V\,.
\end{equation}

We can now prove
\begin{theorem}
\label{LaxMilgramv2}
  Let $a$ be a continuous sesquilinear form satisfying \eqref{lm5n}
  and \eqref{lm5ne}. Suppose, in addition, that $\Vg \subset \Hg$ and that
\eqref{eq:7} and \eqref{densi} hold. Assume further that
  $\Phi_1$ extends into a continuous linear map in $\mathcal L (\Hg)\,$.
Let $S$ be defined by \eqref{lm3}-\eqref{lm4}. Then
\begin{enumerate}
\item $S$ is bijective from $D(S)$ onto $\Hg$ and $S^{-1}\in \Lg(\Hg)\,$.
\item  $D(S)$ is dense in both $\Vg$ and $\Hg$
\item $S$ is closed.
\item Let $b$ denote the conjugate sesquilinear form of $a$, i.e.
  \begin{displaymath}
(u,v) \mapsto b(u,v):=\overline{a(v,u)}\,.
  \end{displaymath} 
Let $S_1$ denote the closed  linear operator associated with $b$ by the same construction . Then 
\begin{equation}\label{infoadj}
  S^*=S_1 \mbox{ and }S_1^* = S\,.
  \end{equation}
\end{enumerate}\
\end{theorem}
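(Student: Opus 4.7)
My plan is to derive the four assertions by combining Theorem~\ref{LMn} with the Riesz representation theorem, applied on both $\Vg$ and $\Hg$.

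For (1), the key observation is that, thanks to \eqref{eq:7}, the antilinear functional $v\mapsto\langle f,v\rangle_\Hg$ is continuous on $\Vg$ for every $f\in\Hg$; Riesz in $\Vg$ therefore produces a bounded operator $J:\Hg\to\Vg$ with $\langle Jf,v\rangle_\Vg=\langle f,v\rangle_\Hg$. Letting $\A\in\Lg(\Vg)$ be the isomorphism supplied by Theorem~\ref{LMn}, the element $u:=\A^{-1}Jf$ satisfies $a(u,v)=\langle f,v\rangle_\Hg$ for all $v\in\Vg$, so $u\in D(S)$ with $Su=f$. Injectivity of $S$ follows at once from \eqref{lm5n}. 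To get continuity of $S^{-1}:\Hg\to\Hg$ I will plug $v=u$ and $v=\Phi_1(u)$ into $a(u,v)=\langle Su,v\rangle_\Hg$; the lower bound \eqref{lm5n}, combined with the assumed $\Hg$-extension of $\Phi_1$ and with \eqref{eq:7}, gives
\[
\alpha\,\|u\|_\Vg^2\leq\|Su\|_\Hg\bigl(\|u\|_\Hg+\|\Phi_1 u\|_\Hg\bigr)\leq C\,\|Su\|_\Hg\,\|u\|_\Vg.
\]

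Assertions (2) and (3) are short consequences. Since $D(S)=\A^{-1}(J\Hg)$ and $\A^{-1}$ is an isomorphism of $\Vg$, density of $D(S)$ in $\Vg$ reduces to density of $J\Hg$ in $\Vg$; any $w\in\Vg$ orthogonal to $J\Hg$ satisfies $\langle w,f\rangle_\Hg=0$ for every $f\in\Hg$, and therefore $w=0$. Density in $\Hg$ then follows from the density hypothesis \eqref{densi}. Closedness of $S$ is automatic from the boundedness of $S^{-1}$ on $\Hg$.

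The real work is in (4). Because $b(u,v)=\overline{a(v,u)}$, the form $b$ satisfies \eqref{lm5n} and \eqref{lm5ne} with $\Phi_1,\Phi_2$ interchanged, so Theorem~\ref{LMn} supplies an isomorphism $\B\in\Lg(\Vg)$ with $b(u,v)=\langle\B u,v\rangle_\Vg$, and the Riesz argument of (1) applied to $b$ shows that $S_1$ is bijective from $D(S_1)$ onto $\Hg$. The inclusion $D(S_1)\subset D(S^*)$ with coincident action is a one-line computation: for $u\in D(S_1)$ and $v\in D(S)$,
\[
\langle Sv,u\rangle_\Hg=a(v,u)=\overline{b(u,v)}=\langle v,S_1 u\rangle_\Hg.
\]
The reverse inclusion is the main obstacle, because a priori an element of $D(S^*)$ lives only in $\Hg$ and not in $\Vg$. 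My remedy is to set $w:=\B^{-1}J(S^*u)\in\Vg$; by construction $w\in D(S_1)$ and $S_1 w=S^*u$, so for every $v\in D(S)$
\[
a(v,u)=\langle Sv,u\rangle_\Hg=\langle v,S^*u\rangle_\Hg=\overline{b(w,v)}=a(v,w),
\]
i.e.\ $\langle Sv,u-w\rangle_\Hg=0$ for every $v\in D(S)$. The surjectivity of $S$ obtained in (1) then upgrades this to $\langle f,u-w\rangle_\Hg=0$ for all $f\in\Hg$, forcing $u=w\in D(S_1)$ and $S^*u=S_1 u$. The identity $S_1^*=S$ follows by the symmetric argument, interchanging the roles of $(a,\A,S)$ and $(b,\B,S_1)$.
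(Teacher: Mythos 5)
Your argument is correct and in fact more complete than what the paper itself provides: the authors only carry out the injectivity estimate (their \eqref{p1}--\eqref{p2}) and then explicitly omit the remaining steps with a pointer to \cite{hel13}, whereas you supply the full Riesz-representation/Lax--Milgram argument that the paper alludes to. Your construction of the bounded map $J:\Hg\to\Vg$ with $\langle Jf,v\rangle_\Vg=\langle f,v\rangle_\Hg$, the identification $D(S)=\A^{-1}(J\Hg)$, the density argument via $J\Hg^\perp=\{0\}$, and the double-inclusion computation for $S^*=S_1$ are all sound, and the swap of the roles of $\Phi_1$ and $\Phi_2$ when passing from $a$ to $b$ is exactly right.

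Two small remarks. First, in the displayed chain for part (4) you begin with $a(v,u)$, but at that point $u\in D(S^*)$ is a priori only in $\Hg$, not in $\Vg$, so $a(v,u)$ need not be defined; the term is harmless because it is never used — the chain $\langle Sv,u\rangle_\Hg=\langle v,S^*u\rangle_\Hg=\langle v,S_1w\rangle_\Hg=\overline{b(w,v)}=a(v,w)=\langle Sv,w\rangle_\Hg$ already gives $\langle Sv,u-w\rangle_\Hg=0$ — but you should simply delete $a(v,u)$ from the display to avoid an undefined expression. Second, a pleasant side effect of your route is that the bound $\|u\|_\Hg+\|\Phi_1 u\|_\Hg\leq C\|u\|_\Vg$ only uses $\Phi_1\in\Lg(\Vg)$ together with \eqref{eq:7}, so the hypothesis that $\Phi_1$ extends into $\Lg(\Hg)$ is never invoked in your proof, whereas the paper does use it to obtain the purely $\Hg$-norm estimate \eqref{p1}; it would be worth flagging that the extension assumption appears to be dispensable for this theorem as stated.
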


\begin{proof}
  We show here only that $S$ is injective. This is a consequence, for
all $u\in D(S)$\,, of
$$
\begin{array}{ll}
\alpha \|u\|^2_\Hg& \leq C\,   \alpha \|u\|^2_V\\
& \leq C \left( |a(u,u)| + | a(u, \Phi_1(u)) \right)|\\
& = C \, |
\langle S u\,,\,u \rangle_\Hg |   +| \langle S u\,,\,\Phi_1(u) \rangle_\Hg |\\
&    \leq
\hat C\,  \|Su\|_\Hg \,\cdot\, \|u\|_\Hg\,,\, 
\end{array}
$$
which leads to
\begin{equation}\label{p1}
\alpha \|u\|_\Hg \leq C\, \| Su\|_\Hg\,,\, \forall u\in D(S)\,,
\end{equation}
and
\begin{equation}\label{p2}
\alpha \|u\|_\Vg \leq C\, \| Su\|_\Hg\, \|u\|_\Hg\,,\, \forall u\in D(S)\,.
\end{equation}
Injectivity easily follows. 

We omit the rest of the proof, as it does not deviate from the proof
of the standard Lax-Milgram Theorem. Interested readers may find a presentation of the
missing details of the standard case in \cite{hel13}
\end{proof}
 \subsection{Definition of $\PP^D$}
We return to the operator $\PP$ introduced in (\ref{eq:6}) on
$C_0^\infty(\Omega)$  and describe how the previous abstract theory applies to
the construction $\PP^D\,$.
\\
\subsubsection{The case $V_2=0$}
 Let
$\Hg=L^2(\Omega,\C)$ and $\Vg$ as introduced in \eqref{defvg}. 
Initially, we equip $\Vg$ with the norm:
$$
u \mapsto \|u\|_{\Vg}:= \sqrt{ \|u\|^2_{H^1_A} + \int  \sqrt{|V_1|^2 +1}\,|u(x)|^2 \,dx}\,.
$$
We later prove (see \eqref{equiv}) that
$$
u \mapsto \|u \|_{\Vg,B}:= \sqrt{ \|u\|^2_{H^1_A} + \int m_{B,V_1} \, |u(x)|^2 \, dx}\,,
$$
is an equivalent norm on $\Vg$.\\
To apply the previous results to the sesquilinear form introduced in \eqref{defa}
we need to establish first that $a$ satisfies \eqref{lm5n} and
\eqref{lm5ne}. To this end we set $\Phi_1(u)=\Phi_2(u)=\phi_1u$, where
\begin{displaymath}
  \phi_1= \frac{\Im V_1} {m_{B,V_1}} \,.
\end{displaymath}
Clearly $\Phi_1$ belongs to $\Lg(\Vg)$ and $ \Lg(\Hg)$, since it is a
multiplication operator by a function in $W^{1,\infty}(\Omega)\,$.  Note that by
\eqref{eq:4} $\nabla\phi_1$ belongs to $ L^\infty (\Omega)\,$.

It can be easily verified that $a$ is continuous on $\Vg\times \Vg$. To use
Theorem~\ref{LaxMilgramv2}, we thus need to establish \eqref{lm5n} and
\eqref{lm5ne}. We first observe that for any $u\in \Vg\,$, we have:
\begin{equation}
\label{eq:9}
\Re a_\gamma (u,u) \geq  K  \int_\Omega |\nabla_A u(x)|^2 dx + \int (\Re V_1 (x)+\gamma) |u(x)|^2 dx\,,
\end{equation}
where $K$ is defined in \eqref{eq:3}. Furthermore,
\begin{displaymath}
\Im a (u, \phi_1 u) = \Im  \sum_{k=1}^d \int_\Omega e^{2i \alpha_k}
\partial_{A_k} u (x) \overline{\partial_{A_k} (\phi_1u)}\,   dx  
+ \int |\Im V_1|^2(m_{B,V_1})^{-1} |u|^2\, dx\,.
\end{displaymath}
After some simple manipulation we arrive at
\begin{multline*}
\Im a (u, \phi_1 u) = \Im  \sum_{k=1}^d  \int_\Omega e^{2i \alpha_k}
\partial_{A_k}u (x) \partial_k \phi_1\,\bar{u}\,dx  
 \\ + \int |\Im V_1|^2(m_{B,V_1})^{- 1} |u|^2\, dx
+  \sum_{k=1}^d  \int_\Omega  \phi_1 \sin (2\alpha_k) \, |\partial_{A_k}u|^2\, dx \,.
\end{multline*}
Clearly, as $\|\phi_1\|_\infty \leq 1\,$, we have
\begin{displaymath}
  \Big|\Im  \sum_{k=1}^d  \int_\Omega e^{2i \alpha_k} \partial_{A_k}u (x) \partial_k \phi_1\,\bar{u}\,dx
  \Big|\leq  \epsilon \, \|\nabla_Au\|_2^2 + C_\epsilon \,   \|u\|_2^2 \,.
\end{displaymath}
Consequently, there exists $\gamma_0$ and $C$ such that for  $\gamma>\gamma_0$
\begin{multline}
\label{eq:11}
C \left(\Im a_\gamma (u, \phi_1 u) + \Re a_\gamma) (u,u) \right)  \geq  \|\nabla_A u(x)\|^2_2 \,dx \\ +  \int \Re V_1
(x)|u(x)|^2 dx+   \int |\Im V_1|^2(m_{B,V_1})^{-1} |u|^2\, dx + \int |u(x)|^2\, dx \,.
\end{multline}

To complete the proof of \eqref{lm5n} we need an estimate for
$\|B\,m_{B,V}^{-1/2}u\|_2$. To this end we introduce the operator identity
($[\cdot,\cdot]$ being the Poisson bracket)
\begin{equation}
  \label{eq:17a}
B_{k\ell} = i [\partial_{A_k},\partial_{A_l}]\,.
\end{equation}
We then use \eqref{eq:17a}  to obtain
\begin{multline*}
 \int_{\Omega} B_{k\ell}(x)^2 (m_{\mathbf B,v})^{-1} | u |^2 \,dx \\
=i\int_{\Omega}  \{[\partial_{A_k},\partial_{A_l}] u\} \cdot  B_{k\ell}(x) (m_{B,V_1})^{-1}
\overline{u} \,dx\\ 
\leq C\Bigl\{ \|\partial_{A_l}u \|_2\|\partial_{A_k}u \|_2+ (\|\partial_{A_k}u \|_2 \\
 +\|\partial_{A_k}u\|_2) \|u \|_2
   \sup_{x\in\Omega}|\nabla (B_{k\ell} (m_{B,V_1})^{-1})|\Bigr\}.
\end{multline*}
As before this leads to
\begin{equation}
\label{2.18an}
\int_{\Omega} B_{k\ell}(x)^2 (m_{B,V})^{-1} |u|^2 \,dx
 \leq C (\|\nabla_A u \|^2_2 + \| u  \|^2_2)
\leq \tilde C   \,  \Re a (u,u)\,.
\end{equation}

We can now deduce from \eqref{2.18an} and \eqref{eq:11} that for some $C>0$,
\begin{equation}
\label{estima}
\| u\|_{\Vg,B}^2 \leq C\, ( | \Im a (u, \phi_1 u) | + | \Re a(u,u)|)\,,
\end{equation}
establishing both \eqref{lm5n} and \eqref{lm5ne} and also 
\begin{equation}\label{equiv}
\| u\|_{\Vg,B} \leq \widetilde C\,  \|u\|_{\Vg}\,.
\end{equation} 
Hence, the linear operator $S_\gamma$ associated with $a_\gamma$ can be defined over
the set \eqref{lm3}, and is an isomorphism from $D(S_\gamma)$ into
$L^2(\Omega,\C)$.  It can be easily verified that $a_\gamma (u,v)=\langle \PP_\gamma
u,v\rangle$ for all $u\in C_0^\infty(\Omega)$ (with $\PP_\gamma = \PP + \gamma)$ and hence
$S_{/C_0^\infty(\Omega)}= \PP_\gamma$. We can then define the extension of
$\mathcal P$ on $D(S_\gamma)$ by
\begin{displaymath}
\mathcal P^D = S_\gamma - \gamma I\,. 
\end{displaymath}
We have, hence, defined a closed extension of
$\PP$, on a set of functions satisfying, as $D(S) \subset \Vg$, a Dirichlet
boundary condition.

As a matter of fact, it can be easily verified that
\begin{equation}
  \label{eq:22}
D(\mathcal P^D)=D (S_\gamma) = \{ u \in \Vg\,|\,\PP u \in L^2(\Omega) \}
\end{equation}
where $ \PP u$ is defined as a distribution on $\Omega$.\\
Hence, $\PP^D$ defines the Dirichlet realization of $\PP$ in $\Omega$.

  Additionally, as $D(S_\gamma)\subset\Vg$ which is compactly embedded in
  $L^2(\Omega,\C)$ in view of \eqref{eq:5}, it follows that $\PP^D$ has
  a compact resolvent.   

\subsubsection{The general case}
We conclude this section by establishing the same results for
$\PP$ when $V_2$ is not necessarily $0$ but satisfies \eqref{eq:4}. 
We define to this end the sesquilinear form $b_\gamma\,:\,\Vg\times\Vg\to\C$
\begin{displaymath}
\begin{array}{ll}
  b_\gamma  (u,v) &:=\sum_k e^{2i \alpha_k} \int_\Omega \partial_{A_k}u(x) \overline{\partial_{A_k}v(x)}\, dx +
\int_\Omega (V(x)+\gamma) u(x)\,\bar v (x)\, dx \\
& = a_\gamma (u,v) + \int_\Omega V_2(x) u(x)\, \bar v(x) \, dx\,.
\end{array}
\end{displaymath}
In view of \eqref{eq:4}, $b_\gamma $ is continuous. Furthermore, for any
$u\in \Vg$ we have by \eqref{estima} that
\begin{displaymath}
\begin{array}{l}
  \|u\|^2_{H^1_A} + \int m_{B,V_1} |u(x)|^2 dx\\
   \quad \leq C(  \Im a_\gamma  (u, \phi_1 u)  +
   \Re a_\gamma (u,u)) \leq C( \Im b_\gamma (u, \phi_1 u)  +
  \Re b_\gamma (u,u) ) + C\langle|V_2|u,u\rangle \,.
  \end{array}
\end{displaymath}
We thus obtain, by \eqref{eq:4}, the existence of $\gamma_1$ and $C$ such that:
\begin{displaymath}
  \|u \|_{\Vg,B} \leq C( \Im b_\gamma (u, \phi_1 u) | +  \Re b_{\gamma + \gamma_1} (u,u))\,.
\end{displaymath}
It therefore follows, that we can apply to $b_{\gamma+\gamma_1}$ the same
  construction which was applied to $a_\gamma$, to obtain the same
  domain in the general case (and the same form domain) and that its
  resolvent is compact.

\section{The Schatten class} \label{sec:3}

In this section we attempt to obtain the optimal value of $p$ for the
Schatten class of the resolvent of the Dirichlet realization  $\PP^D$. We begin by showing that  if
$(-\Delta_A+|V|+1)^{-1}\in C^p$ then $(\PP^D-\lambda)^{-1}\in C^p$, thereby
allowing us to use techniques from selfadjoint theory. Then, we
provide a criterion on $V$ and $B$ which can be used to determine
whether the resolvent of $\PP^D$ is in a given Schatten class. For
convenience of notation we omit from now on the superscript $D$
and write  $\PP$ instead of $\PP^D$. 

\subsection{Comparison with a selfadjoint problem}

We begin with the following comparison result
\begin{proposition}
\label{lem:3.1}
Let $\{\mu_n\}_{n=1}^\infty$ denote the $n'th$ eigenvalue of
$(\PP^*\PP)^{1/2}$, where $\PP^*\PP$ is the linear operator associated
with the sesquilinear form $q:\,D(\PP)\times D(\PP)\to\C$ given by
  \begin{displaymath}
    q(u,v) = \langle\PP u,\PP v\rangle \,.
  \end{displaymath}
The domain of $\PP^*\PP$ is given by
\begin{displaymath}
  D(\PP^*\PP) = \{u\in D(\PP) \, | \, \PP  u\in D(\PP^*) \}\,.
\end{displaymath}
Consider then the Dirichlet  realization in $\Omega$ of $-\Delta_A+ m_{B,V_1}
$, and let
\begin{displaymath}
  \sigma(-\Delta_A+ m_{B,V_1} )=\{\nu_j\}_{j=1}^\infty \,.
\end{displaymath}
 Then, there exists $C>0$ such that
\begin{equation}
\label{eq:25}
  \nu_n \leq C\,(1+\mu_n) \,.
\end{equation}
\end{proposition}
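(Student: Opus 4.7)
The plan is to prove \eqref{eq:25} by combining the variational (min--max) characterization of eigenvalues of selfadjoint operators having compact resolvent with the coercivity estimate \eqref{estima} already established in Section~\ref{sec:2}. The selfadjoint operator $T = -\Delta_A + m_{B,V_1}$ has form domain $\Vg$ and quadratic form $q_1(u,u) = \|\nabla_A u\|_2^2 + \int_\Omega m_{B,V_1}\,|u|^2\,dx = \|u\|_{\Vg,B}^2$, while $\PP^*\PP$ is nonnegative selfadjoint with form $q(u,u) = \|\PP u\|_2^2$ on form domain $D(\PP) \subset \Vg$. Both operators have compact resolvent (for $\PP^*\PP$ this follows from the compactness of the resolvent of $\PP$ established in Section~\ref{sec:2}), so that min--max yields
\begin{equation*}
\mu_n^2 = \min_{\substack{E \subset D(\PP) \\ \dim E = n}} \; \max_{\substack{u \in E \\ \|u\|_2 = 1}} \|\PP u\|_2^2, \qquad \nu_n = \min_{\substack{E \subset \Vg \\ \dim E = n}} \; \max_{\substack{u \in E \\ \|u\|_2 = 1}} q_1(u,u).
\end{equation*}

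Next, I would convert \eqref{estima} into a bound in terms of $\|\PP u\|_2$. Recall that for a sufficiently large $\gamma > 0$ the estimate reads $\|u\|_{\Vg,B}^2 \leq C\bigl(|\Im a_\gamma(u,\phi_1 u)| + |\Re a_\gamma(u,u)|\bigr)$, and that for $u \in D(\PP)$ one has $a_\gamma(u,v) = \langle \PP_\gamma u, v\rangle$ with $\PP_\gamma = \PP + \gamma I$ (this is exactly how $D(\PP)$ is characterized in \eqref{eq:22}). Since $\|\phi_1\|_\infty \leq 1$, both $|\langle \PP_\gamma u, u\rangle|$ and $|\langle \PP_\gamma u, \phi_1 u\rangle|$ are bounded by $\|\PP_\gamma u\|_2 \|u\|_2$. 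Hence, for any $u \in D(\PP)$ with $\|u\|_2 = 1$,
\begin{equation*}
q_1(u,u) = \|u\|_{\Vg,B}^2 \leq 2C\, \|\PP_\gamma u\|_2 \leq 2C\bigl(\|\PP u\|_2 + \gamma\bigr).
\end{equation*}

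Finally, I would pick an $n$-dimensional subspace $E_n \subset D(\PP^*\PP) \subset D(\PP)$ spanned by the first $n$ eigenvectors of $\PP^*\PP$, which realizes the minimum in the min--max for $\mu_n^2$. Since $D(\PP) \subset \Vg$, this $E_n$ is admissible in the variational characterization of $\nu_n$, giving
\begin{equation*}
\nu_n \leq \max_{\substack{u \in E_n \\ \|u\|_2 = 1}} q_1(u,u) \leq 2C \max_{\substack{u \in E_n \\ \|u\|_2 = 1}} \|\PP u\|_2 + 2C\gamma \leq 2C\,\mu_n + 2C\gamma,
\end{equation*}
which is \eqref{eq:25}. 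The only nontrivial point--and the step where I would spend care--is checking that the form identity $a_\gamma(u,v) = \langle \PP_\gamma u, v\rangle$ is valid with $v = \phi_1 u$ for $u \in D(\PP)$; this is true because $\phi_1 u \in L^2(\Omega)$ and the identity extends from $v \in \Vg$ to $v \in \Hg$ by the definition of $D(\PP)$ in \eqref{eq:22}. Everything else is a routine application of the selfadjoint min--max principle.
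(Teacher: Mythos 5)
Your argument is correct and follows essentially the same route as the paper: the coercivity estimate from Section~\ref{sec:2} bounds the quadratic form of $-\Delta_A+m_{B,V_1}$ on $D(\PP)$ by $C\left(\|\PP u\|_2\|u\|_2+\|u\|_2^2\right)$, and then the min--max characterization of $\nu_n$ (for which the paper invokes Proposition 11.9 of \cite{hel13}) is tested on the span of the first $n$ eigenvectors of $\PP^*\PP$, exactly as you do. The only cosmetic difference is that for $V_2\neq 0$ you should quote the estimate involving $b_{\gamma+\gamma_1}$ from the end of Section~\ref{sec:2} rather than the one for $a_\gamma$, since $\langle\PP_\gamma u,v\rangle$ is represented by the form $b_\gamma$, not $a_\gamma$; with that substitution (and noting $\phi_1 u\in\Vg$, so no extension of the form identity is even needed) your proof goes through verbatim.
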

\begin{proof}
For any $u\in\Vg$ we have
\begin{equation}
\langle ( - \Delta_A + m_{B,V_1} ) u,u \rangle \leq   C\, \left( | b (u, \phi_1 u) | + |
  b(u,u)| + \|u\|_2^2\right)\,.
\end{equation}
Hence, for any $u\in D(\PP)\,$, we can write
\begin{equation}
\langle ( - \Delta_A + m_{B,V_1} ) u,u \rangle \leq   C\, \left( | \langle \PP u, \phi_1 u \rangle | + | \langle \PP
  u , u\rangle |+ \|u\|_2^2\right)\,.
\end{equation}
Consequently, for all $u\in D(\PP)$,
\begin{equation}
\langle ( - \Delta_A + m_{B,V_1} ) u,u \rangle \leq   C\, \left( \| \PP u\|_2 \| u\|_2 + \|u\|_2^2\right)\,.
\end{equation}

It thus follows that for each $j\geq1$ there exists
a vector space $E_j$ of dimension $j$ in $D(\PP^*\PP)$ (hence 
also in $D(\PP)$) such that, for all $u
\in E_j$, we have
\begin{equation}
\langle ( - \Delta_A + m_{B,V_1} ) u,u \rangle \leq   C\, (\mu_j+1)\,  \| u\|^2_2\,.
\end{equation}
By Proposition 11.9 in \cite{hel13} applied to the Dirichlet
realization of \break $ ( - \Delta_A + m_{B,V_1} )$ (observing that the
domain of the operator can be replaced by the form domain $\Vg$ of
this Dirichlet realization and that $D(\PP)\subset\Vg$), we then obtain
\eqref{eq:25}.
\end{proof}

\subsection{Proof of Theorem \ref{prop:1.2}}
By \eqref{equiv} we have that
\begin{displaymath}
  \langle u,(-\Delta_A+m_{B,V_1})u\rangle\leq  \tilde{C} \langle u,(-\Delta_A+|V|+1)u\rangle\,.
\end{displaymath}
Thus, by the min-max principle, the resolvents of the operators $-\Delta_A+m_{B,V_1}$ and
$-\Delta_A+|V|+1$ always belong to the same Schatten class.  We
therefore  begin by restating the theorem in the following equivalent
form
\begin{theorem}
\label{thschatten}
 Suppose that 
 \begin{equation}
\label{eq:26}
    \int_{\Omega\times  \R^d} (|\xi|^2 + m_{B,V_1} )^{-p }\, dx d\xi <\infty \,.
 \end{equation}
Then, $(-\Delta_A+m_{B,V_1})^{-1} \in C^p(L^2(\Omega,\C))\,$.
\end{theorem}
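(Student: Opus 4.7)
The plan is to estimate the Schatten norm of $T:=(-\Delta_A+m_{B,V_1})^{-1}$ directly via a heat semigroup bound. Write $H:=-\Delta_A+m_{B,V_1}$ for the Friedrichs/Dirichlet realization in $\Omega$; it is selfadjoint, satisfies $H\ge 1$, and by \eqref{eq:5} has compact resolvent, with eigenvalues $0<\nu_1\le\nu_2\le\cdots\to\infty$. The Schatten $p$-norm of $T$ satisfies $\|T\|_p^p=\sum_n\nu_n^{-p}$, and the Mellin identity
\[
\nu^{-p}=\frac{1}{\Gamma(p)}\int_0^\infty t^{p-1}e^{-t\nu}\,dt
\]
converts this into a heat trace estimate:
\[
\|T\|_p^p=\frac{1}{\Gamma(p)}\int_0^\infty t^{p-1}\,\mathrm{tr}(e^{-tH})\,dt.
\]

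The heat trace is controlled by combining two classical ingredients. Golden-Thompson, applied to the two selfadjoint nonnegative operators $-\Delta_A$ and $m_{B,V_1}$ whose form sum is $H$, gives
\[
\mathrm{tr}(e^{-tH})\le \mathrm{tr}\bigl(e^{t\Delta_A}\,M_{e^{-tm_{B,V_1}}}\bigr)=\int_\Omega K_t^A(x,x)\,e^{-tm_{B,V_1}(x)}\,dx,
\]
where $K_t^A$ is the (selfadjoint, hence real-diagonal) kernel of $e^{t\Delta_A}$ with Dirichlet condition on $\partial\Omega$, and $M_f$ denotes multiplication by $f$. The diamagnetic inequality of Kato-Simon, combined with domain monotonicity of the Dirichlet heat kernel, yields the pointwise bound
\[
K_t^A(x,x)\le e^{t\Delta_\Omega^D}(x,x)\le (4\pi t)^{-d/2},
\]
and hence
\[
\mathrm{tr}(e^{-tH})\le (4\pi t)^{-d/2}\int_\Omega e^{-tm_{B,V_1}(x)}\,dx.
\]

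Substituting into the Mellin representation and exchanging the order of integration (Tonelli, since all integrands are nonnegative),
\begin{align*}
\|T\|_p^p &\le \frac{(4\pi)^{-d/2}}{\Gamma(p)}\int_\Omega \!\int_0^\infty t^{\,p-1-d/2}e^{-tm_{B,V_1}(x)}\,dt\,dx \\
&= \frac{\Gamma(p-d/2)}{(4\pi)^{d/2}\,\Gamma(p)}\int_\Omega m_{B,V_1}(x)^{\,d/2-p}\,dx.
\end{align*}
To identify this with the hypothesis, note first that \eqref{eq:26} forces $p>d/2$ (otherwise the $\xi$-integral diverges at every $x$ with $m_{B,V_1}(x)<\infty$), and the elementary rescaling $\xi=\sqrt{m}\,\eta$ gives
\[
\int_{\R^d}(|\xi|^2+m)^{-p}\,d\xi=c_{p,d}\,m^{\,d/2-p},\qquad m>0,\ p>d/2,
\]
so \eqref{eq:26} is equivalent to $\int_\Omega m_{B,V_1}(x)^{d/2-p}\,dx<\infty$, which is exactly what has just been bounded above. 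Therefore $T\in C^p(L^2(\Omega,\C))$.

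The step that deserves the most care is the rigorous use of Golden-Thompson together with the diamagnetic pointwise diagonal bound in this generality: an unbounded domain $\Omega$ with Dirichlet condition and a potential $m_{B,V_1}$ that is merely $C^1$ and may grow arbitrarily. The clean way to handle this is by exhausting $\Omega$ with smooth, relatively compact subdomains $\Omega_j\uparrow\Omega$, on each of which the trace bounds and kernel estimates are entirely standard, and then passing to the limit via monotone convergence of Dirichlet eigenvalues (min-max), which gives $\nu_n^{(j)}\downarrow\nu_n$ and hence the desired estimate on $\Omega$. Everything else is bookkeeping.
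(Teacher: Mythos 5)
Your proposal is correct, but it takes a genuinely different route from the paper's. The paper does not re-derive any heat-kernel or trace estimates: it quotes Theorem 2.1 of Combes--Seiler--Schrader (reference [CSS]), which gives ${\rm Trace}\,(-\Delta_A+U+1)^{-p}\le(2\pi)^{-d}\int_{\R^d\times\R^d}(\xi^2+U+1)^{-p}\,dx\,d\xi$ on all of $\R^d$, and reduces the Dirichlet problem on $\Omega$ to that case by extending $U=m_{B,V_1}$ to $\R^d$ as $\rho(1+x^2)^M$ on $\Omega^c$, comparing eigenvalues through monotonicity of the form domains (extension by zero of the Dirichlet form), and then letting $\rho\to\infty$ so that the contribution of $\Omega^c$ to the phase-space integral disappears. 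You instead prove the bound directly on $\Omega$: the Mellin representation of $\nu^{-p}$, Golden--Thompson for the form sum, the diamagnetic inequality together with domain monotonicity for the diagonal heat-kernel bound, and an exhaustion $\Omega_j\uparrow\Omega$ with $\nu_n^{(j)}\downarrow\nu_n$, which is legitimate because $C_0^\infty(\Omega)$ is a form core for the Dirichlet realization, so Fatou/monotone convergence transfers the bound to $\Omega$. In effect you reprove the relevant part of the CSS bound, with the non-sharp constant $\Gamma(p-d/2)/\bigl((4\pi)^{d/2}\Gamma(p)\bigr)$ in place of $(2\pi)^{-d}$, which is harmless since only finiteness matters; you also correctly note that the hypothesis \eqref{eq:26} forces $p>d/2$ and is equivalent to $\int_\Omega m_{B,V_1}^{d/2-p}\,dx<\infty$, and the lower bound $m_{B,V_1}\ge1$ is what lets you pass between $H^{-1}$ and $(H+1)^{-1}$, exactly as in the paper. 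What the paper's extension trick buys is that all kernel and trace technology stays inside the cited black box; what your argument buys is self-containedness, at the cost of having to justify Golden--Thompson for form sums and kernel domination on an unbounded domain, which your exhaustion step handles adequately.
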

We note that by \eqref{eq:25} it follows that whenever \eqref{eq:26}
is satisfied, then $(\PP-\lambda)^{-1}\in C^p$.
\begin{proof}
For  $\Omega=\mathbb R^d$, \eqref{eq:26} has been established in
\cite{CSS}. To extend it to the Dirichlet realization of
$-\Delta_A+m_{B,V_1}$ for general domains we extend $$U=m_{B,V_1}$$  to
$\R^d$ in the following manner: 
\begin{displaymath}
  U_{\rho,M} =
  \begin{cases}
  U & \text{in } \Omega \\
   \rho (1+x^2)^M & \text{in } \Omega^{c} \,,
  \end{cases}
\end{displaymath}
where $M$ is chosen so that
\begin{displaymath}
   \int_{\R^d \times \R^d} (|\xi|^2 +  (1+x^2)^{M})^{-p} \,  dx d\xi <\infty \,.
\end{displaymath}
and $\rho \geq 1\,$.\\
Let
\begin{displaymath}
  \sigma(-\Delta_A + U_{\rho,M}) = \{\mu_j\}_{j=1}^\infty \;,\; \sigma(-\Delta_A + U) =
  \{\lambda_j\}_{j=1}^\infty \,,
\end{displaymath}
where $-\Delta_A + U_{\rho,M}$ is the unique self-adjoint extension on $\mathbb R^d$ (by Kato's theorem) and $-\Delta_A + U$ is the Dirichlet realization in $\Omega$.\\
It can be easily verified  by comparison of the form domains that
\begin{displaymath}
  \mu_j \leq \lambda_j \,,\quad \forall j\geq1 \,.
\end{displaymath}
From Theorem 2.1 in \cite{CSS} we then get, for any $\rho \geq 1\,$, 
\begin{multline}
  {\rm Trace}  (-\Delta_A + U + 1)^{-p} \\ \leq  {\rm Trace}  (-\Delta_A +
U_{\rho,M}+1)^{-p} \\ \leq  (2\pi)^{-n} \int_{ \R^d\times \R^d} (\xi^2 + U_{\rho,M}+1)^{-p} dx d\xi\,.
\end{multline}
Taking the limit $\rho \to\infty$ yields
\begin{displaymath}
  {\rm Trace}  (-\Delta_A + U + 1)^{-p}  \leq  (2\pi)^{-n} \int_{ \Omega\times \R^d} (\xi^2
  + U +1)^{-p} dx d\xi\,.
\end{displaymath}
It follows that if \eqref{eq:26} holds true, then $(-\Delta_A +
U)^{-1}\in C^p(L^2(\Omega,\C))$.
\end{proof}

\subsection{On the optimality of the criterion \eqref{eq:26}} 
\label{sec:3.1}
Let
\begin{displaymath}
  \Dg= \{ u\in C^\infty(\bar{\Omega},\C)\cap H^1_0(\Omega,\C)\,|\, {\rm Supp}\,u \text{
    is a compact subset of } \overline{\Omega}\} \,.
\end{displaymath}
It is clear that $\Dg$ is in $D(\PP^D)$. 
We first show
\begin{lemma}
\label{lem:dens}
  Let $D(\PP^D)$ be given by \eqref{eq:22}. Then, $\Dg$ is
  dense in $D(\PP^D)$ under the  norm $\|\PP \cdot\|_2 + \|\cdot\|_{\Vg}$.
\end{lemma}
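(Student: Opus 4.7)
\emph{Step 1 (cut-off at infinity).} Given $u\in D(\PP^D)$, fix $\chi_n\in C_0^\infty(\R^d)$ with $\chi_n\equiv 1$ on $B_n$, $\mathrm{supp}\,\chi_n\subset B_{2n}$, and derivatives bounded uniformly in $n$, and set $u_n=\chi_n u$. Multiplication by $\chi_n$ preserves $\Vg$, and since $V$ commutes with $\chi_n$ the only contribution to $\PP u_n-\chi_n\PP u$ is the commutator with the principal part,
\begin{equation*}
[\PP_0,\chi_n]u\;=\;-\sum_{k=1}^d e^{2i\alpha_k}\bigl(2(\partial_k\chi_n)\partial_{A_k}u+(\partial_k^2\chi_n)u\bigr),
\end{equation*}
which lies in $L^2(\Omega)$ (since $u,\nabla_A u\in L^2$ by $u\in\Vg$) and is supported in $B_{2n}\setminus B_n$; by dominated convergence it tends to $0$ in $L^2$. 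Combined with $\chi_n\PP u\to\PP u$ this yields $\PP u_n\to\PP u$ in $L^2$, and so in particular $u_n\in D(\PP^D)$. An identical pattern, plus dominated convergence applied to the weighted $L^2$ part of $\|\cdot\|_\Vg$, gives $u_n\to u$ in $\Vg$. It therefore suffices to approximate elements $u\in D(\PP^D)$ with $\mathrm{supp}\,u$ compact in $\overline{\Omega}$.

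\emph{Step 2 (smoothing up to the boundary).} Pick $R$ with $\mathrm{supp}\,u\subset\overline{B_R}$, and set $\Omega_R:=\Omega\cap B_{R+1}$, a bounded domain of class $C^{2,\alpha}$. On $\overline{\Omega_R}$ the potential $V=V_1+V_2$ is bounded, so $Vu\in L^2(\Omega_R)$ and hence $\PP_0 u=\PP u-Vu\in L^2(\Omega_R)$. The principal symbol $\sum_k e^{2i\alpha_k}\xi_k^2$ has real part bounded below by $K|\xi|^2$ via \eqref{eq:3}, so $\PP_0$ is uniformly elliptic with $C^2$ coefficients; elliptic regularity up to the boundary for the Dirichlet problem yields $u\in H^2(\Omega_R)\cap H^1_0(\Omega_R)$. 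The standard density theorem on bounded $C^{2,\alpha}$ domains (partition of unity, interior mollification, and local boundary flattening) produces a sequence $v_k\in C^\infty(\overline{\Omega_R})\cap H^1_0(\Omega_R)$ with $v_k\to u$ in $H^2(\Omega_R)$; after extension by zero, each $v_k$ belongs to $\Dg$. On the compact set $\overline{\Omega_R}$ the weight $\sqrt{|V_1|^2+1}$ and all coefficients of $\PP$ are bounded, so $H^2$-convergence automatically forces $v_k\to u$ in $\Vg$ and $\PP v_k\to\PP u$ in $L^2$. A diagonal extraction combining the two steps finishes the proof.

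The main obstacle is the elliptic-regularity assertion invoked in Step 2: one must run the classical $H^2$ up-to-the-boundary estimate for the Dirichlet problem for $\PP_0$, whose principal coefficients are complex. This goes through because $\mathrm{Re}\sum_k e^{2i\alpha_k}\xi_k^2\geq K|\xi|^2$ provides strong ellipticity and both $\partial\Omega$ and the magnetic potential $A$ are $C^2$, so Nirenberg's tangential difference-quotient argument applies without essential change; the resulting constants are allowed to depend on $R$ since only qualitative convergence is needed here.
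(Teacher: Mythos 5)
Your proof is correct and follows essentially the same route as the paper: first truncate at infinity with a cut-off and control the commutator $[\PP_0,\chi_n]u$ using $u\in\Vg$ (the paper uses $\eta(|x|/k)$ with gradient $O(1/k)$ where you use annulus support plus dominated convergence, an immaterial difference), then use elliptic regularity up to the boundary to place the compactly supported function in $H^2\cap H^1_0$ and conclude by the standard smooth approximation on a bounded $C^{2,\alpha}$ domain. No gaps beyond what the paper itself leaves to ``standard arguments.''
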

\begin{proof}
  Let $u\in D(\PP^D)$. By  \eqref{p2},  \eqref{lm5ne}  and \eqref{equiv} we have 
  \begin{equation}
\label{eq:41}
    \| u\|_{\Vg,B} \leq C\, (\|u\|_2 + \|\PP u\|_2)\,.
  \end{equation}
  Let $\eta\in C^\infty(\R_+,[0,1])$ denote a cutoff function
  satisfying
\begin{displaymath}
  \eta(t)=
  \begin{cases}
    1 & t<1 \\
    0 & t >2\,.
  \end{cases}
\end{displaymath}
Denote then by $\eta_k\,:\,\Omega\to\R_+$  the restriction to $\Omega$ of the
cutoff function defined by $\eta_k(x)=\eta(|x|/k)$ for all $x\in\Omega$. By
our assumption on $\partial\Omega$ we have that $\eta_k\in C^2(\overline{\Omega})$ for all $k\geq1$.

Next, define the sequence $\{v_k\}_{k=1}^\infty$ through $v_k=\eta_k u\,$.
Using the local regularity of the Dirichlet problem we easily conclude
that $v_k\in H^2(\Omega,\C)$ has compact support, and it can be readily
verified that $\|v_k - u\|_{\Vg} \ar 0$ as $k\ar
+\infty$.\\
We now prove that $v_k\to u$ in the graph norm. To prove that
\begin{equation}\label{limi}
\|\PP(v_k-u)\|_2\to0
\end{equation}
 we first compute
\begin{displaymath}
  \PP (v_k-u) = (\eta_k-1) \PP u +
  u\sum_{m=1}^de^{2i\alpha_m}\frac{\partial^2\eta_k}{\partial x_m^2} +
  2\sum_{m=1}^de^{2i\alpha_m}\frac{\partial\eta_k}{\partial x_m}\partial_{A_m}u \,.
\end{displaymath}
The first two terms on the right-hand-side tend to $0$ (in $L^2$
sense) since both $u$ and $\PP u$ are in $L^2(\Omega,\C)$. For the last
term we have
\begin{displaymath}
  \Big\|\sum_{m=1}^de^{2i\alpha_m}\frac{\partial\eta_k}{\partial x_m}\partial_{A_m}u \Big\|_2 \leq
  \frac{C}{k}\|\nabla_Au\|_2 \,.
\end{displaymath}
Hence, by \eqref{eq:41}, we obtain \eqref{limi}.

The result of the foregoing discussion is that
\begin{displaymath}
   \tilde{\Dg}= \{ u\in H^2(\bar{\Omega},\C)\cap H^1_0(\Omega,\C)\,|\, {\rm Supp}\,u \text{
    is a compact subset of } \overline{\Omega}\} \,,
\end{displaymath}
is dense in $D(\PP^D)$ with respect to the  norm introduced in the lemma. One can now
complete the proof invoking standard arguments that show that $\Dg$ is
dense in $\tilde{\Dg}$ under the same norm. 
\end{proof}
We continue this subsection by the following  lemma. 
\begin{lemma}
  \label{lem:aux} 
 Under Assumptions  (\ref{eq:1})-(\ref{defpp0}), there exists $C(\Omega,\PP)$ such that
for all $u\in\Dg$ we have
\begin{equation}
  \label{eq:19}
\|Bu\|_2^2 \leq C(\|\Delta_Au\|_2^2 + \|V_1 u \|_2^2+\|u\|_2^2) \,.
\end{equation}
\end{lemma}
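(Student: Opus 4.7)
The plan is to exploit the commutator identity \eqref{eq:17a}, namely $B_{k\ell}=i[\partial_{A_k},\partial_{A_\ell}]$, which turns $\|B_{k\ell}u\|_2^2$ into an integral in which magnetic derivatives of $u$ are paired against $B_{k\ell}\overline{u}$. Since $u\in\Dg$ is smooth with compact support in $\overline{\Omega}$ and vanishes on $\partial\Omega$, all integrations by parts will be boundary-free. The overall strategy mimics \eqref{2.18an}: produce a $|B|$-weighted magnetic Dirichlet integral and then trade that weighted energy for $\|\Delta_Au\|_2$ through a second integration by parts, closing the loop via the structural identity $\|m_{B,V_1}u\|_2^2=\|V_1u\|_2^2+\|Bu\|_2^2+\|u\|_2^2$.

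First I would write
\[
\|B_{k\ell}u\|_2^2 \;=\; i\int B_{k\ell}\overline{u}\,(\partial_{A_k}\partial_{A_\ell}-\partial_{A_\ell}\partial_{A_k})u\,dx
\]
and integrate by parts once in each of the two summands, moving one $\partial_{A_\cdot}$ off the double derivative onto $B_{k\ell}\overline u$. The piece antisymmetric in $k\leftrightarrow\ell$ collapses to $-2\int B_{k\ell}\,\Im(\partial_{A_k}u\,\overline{\partial_{A_\ell}u})\,dx$, which is controlled by $\int|B_{k\ell}|\,|\nabla_Au|^2\,dx$; the remaining pieces produce terms of the form $\int(\partial_j B_{k\ell})\overline u\,\partial_{A_i}u\,dx$, controlled by $\int|\nabla B_{k\ell}|\,|u|\,|\nabla_Au|\,dx$. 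Invoking \eqref{eq:2} and the pointwise bound $|B_{k\ell}|\leq m_{B,V_1}$ then yields
\[
\|Bu\|_2^2 \;\leq\; C\int m_{B,V_1}\bigl(|\nabla_Au|^2+|u|\,|\nabla_Au|\bigr)\,dx .
\]

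Next I would convert the $m_{B,V_1}$-weighted magnetic Dirichlet energy into quantities involving $\Delta_Au$. Setting $g=m_{B,V_1}$ and applying Green's formula to $\langle -\Delta_Au,gu\rangle$ (boundary term absent since $u=0$ on $\partial\Omega$) gives
\[
\int g\,|\nabla_Au|^2\,dx \;=\; -\int g\,\overline u\,\Delta_Au\,dx - \int \overline u\,(\nabla g)\cdot\nabla_Au\,dx .
\]
The chain rule and \eqref{eq:2} give $|\nabla m_{B,V_1}|\leq |\nabla V_1|+|\nabla B|\leq C\,m_{B,V_1}$, so Cauchy--Schwarz plus the standard estimate $\|\nabla_Au\|_2^2=-\langle\Delta_Au,u\rangle\leq\tfrac12(\|\Delta_Au\|_2^2+\|u\|_2^2)$ yields
\[
\|Bu\|_2^2 \;\leq\; C\,\|m_{B,V_1}u\|_2\,\bigl(\|\Delta_Au\|_2+\|u\|_2\bigr) .
\]

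The main obstacle is now closing the loop: because $\|m_{B,V_1}u\|_2^2=\|V_1u\|_2^2+\|Bu\|_2^2+\|u\|_2^2$, the quantity we wish to bound reappears implicitly on the right. I handle this by expanding $\|m_{B,V_1}u\|_2\leq\|V_1u\|_2+\|Bu\|_2+\|u\|_2$ and applying a weighted Young inequality $ab\leq\epsilon a^2+(4\epsilon)^{-1}b^2$ to the cross term $C\,\|Bu\|_2(\|\Delta_Au\|_2+\|u\|_2)$. Choosing $\epsilon$ sufficiently small absorbs a fraction of $\|Bu\|_2^2$ into the left-hand side, leaving only $\|V_1u\|_2$, $\|\Delta_Au\|_2$, and $\|u\|_2$ on the right, which is precisely \eqref{eq:19}.
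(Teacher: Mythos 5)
Your argument is correct and is essentially the paper's own: you invoke the commutator identity \eqref{eq:17a}, integrate by parts twice (first to expose the $|B|$-weighted magnetic Dirichlet energy, then to trade $\int m_{B,V_1}|\nabla_A u|^2$ for $\|\Delta_A u\|_2$), and close via Young's inequality. The only difference is organizational — the paper performs two separate absorption steps (producing the intermediate bounds \eqref{eq:43} and \eqref{eq:44}) whereas you defer all absorption of $\|Bu\|_2$ to a single step at the end; both are valid.
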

\begin{proof}
  Let $u\in\Dg$. We use the identity \eqref{eq:17a} to obtain, after
  integration by parts 
   \begin{equation}
\label{eq:40}
     \|B_{km}u\|_2^2 =
     \Im\langle[\partial_{A_k},\partial_{A_m}]u,B_{km}u\rangle\leq  |\langle\partial_{A_m}u,\partial_{A_k}(B_{km}u)\rangle| +   |\langle\partial_{A_k}u,\partial_{A_m}(B_{km}u)\rangle|\,. 
   \end{equation}
As
\begin{displaymath}
  \langle\partial_{A_m}u\,,\,\partial_{A_k}(B_{km}u)\rangle =  \langle B_{km}\partial_{A_m}u\,,\,\partial_{A_k}\,u\rangle +
  \langle\partial_{A_m}u\,,\,u\partial_kB_{km}\rangle\,,
\end{displaymath}
it follows from \eqref{eq:2} and \eqref{eq:40} that
\begin{displaymath}
   \|B_{km}u\|_2^2 \leq
   2\||B_{km}|^{1/2}\partial_{A_m}u\|_2\||B_{km}|^{1/2}\partial_{A_k}u\|_2 +
  C\, \|m_{B,V_1}u\|_2 (\|\partial_{A_m}u\|_2 + \|\partial_{A_k}u\|_2 ) \,.
\end{displaymath}
Consequently, for any $\epsilon>0$,
\begin{multline*}
  \|B_{km}u\|_2^2  \leq
   2\||B_{km}|^{1/2}\partial_{A_m}u\|_2\||B_{km}|^{1/2}\partial_{A_k}u\|_2 \\
  + C\Big[\frac{\epsilon}{2}\|m_{B,V_1}u\|_2^2 +
   \frac{1}{4\epsilon}(\|\partial_{A_m}u\|_2^2+ \|\partial_{A_k}u\|_2^2+\|u\|_2^2)\Big] \,.
\end{multline*}
Hence, for all $\epsilon>0$ there exists $C_\epsilon>0$ such that
\begin{multline*}
  \|B_{km}u\|_2^2 - \epsilon\|Bu\|_2^2\\ \leq C_\epsilon \left(\||B_{km}|^{1/2}\partial_{A_m}u\|_2^2 +
  \||B_{km}|^{1/2}\partial_{A_k}u\|_2^2 +\|V_1u\|_2^2\right.\\
  \left. +   \|\partial_{A_m}u\|_2^2+ \|\partial_{A_k}u\|_2^2 +\|u\|_2^2\right) \,.
\end{multline*}
Summing over $k$ and $m$ we then obtain, using the standard inequality
\begin{equation}\label{ineqstand}
  \|\nabla_Au\|_2^2 \leq \frac{1}{2}(\|\Delta_Au\|_2^2 + \|u\|_2^2)\,,
\end{equation}
that for sufficiently small $\epsilon$ ,
\begin{displaymath}
\|Bu\|_2^2 \leq C ( \||B|^{1/2}\nabla_Au\|_2^2 + \|\Delta_Au\|_2^2+\|V_1u\|_2^2+\|u\|_2^2)\,,
\end{displaymath}
and hence,
\begin{equation}
\label{eq:43}
  \|m_{B,V_1}u\|_2^2 \leq C ( \||B|^{1/2}\nabla_Au\|_2^2 + \|\Delta_Au\|_2^2+\|V_1u\|_2^2+\|u\|_2^2)\,.
\end{equation}

Next, we use integration by parts to show that
\begin{displaymath}
  -\langle m_{B,V_1}u\,,\, \Delta_Au\rangle = 
  \langle m_{B,V_1}\nabla_Au\,,\, \nabla_Au\rangle +
  \langle u\nabla m_{B,V_1}\, ,\, \nabla_Au\rangle \,.
\end{displaymath}
By \eqref{eq:2} we have that
\begin{displaymath}
  |\nabla m_{B,V_1}| \leq C \, m_{B,V_1} \,.
\end{displaymath}
Consequently,
\begin{displaymath}
   - \langle m_{B,V_1}u\,,\, \Delta_Au\rangle \geq
   \|m_{B,V_1}^{1/2}\nabla_Au\|_2^2 -   C\, \|m_{B,V_1}u\|_2\, \|\nabla_Au\|_2 \,.
\end{displaymath}
With the aid of \eqref{ineqstand},  we then obtain that
\begin{displaymath}
  \|m_{B,V_1}u\|_2\|\Delta_Au\|_2 \geq \|m_{B,V_1}^{1/2}\nabla_Au\|_2^2- C \|m_{B,V_1}u\|_2\|(\|\Delta_Au\|_2 + \|u\|_2)\,.
\end{displaymath}
Hence,
\begin{displaymath}
  \|m_{B,V_1}^{1/2}\nabla_Au\|_2^2 \leq C\, \|m_{B,V_1}u\|_2(\|\Delta_Au\|_2 +\|u\|_2)\,,
\end{displaymath}
from which we conclude that
\begin{equation}
\label{eq:44}
  \|m_{B,V_1}^{1/2}\nabla_Au\|_2^2 \leq \epsilon\,  \|m_{B,V_1} u\|^2 +  C_\epsilon \, (\|\Delta_Au\|_2^2 +\|u\|_2^2 )\,,
\end{equation}
which, combined with \eqref{eq:43}, easily yields \eqref{eq:19}. 
\end{proof}
We continue with the following comparison  result
\begin{lemma}
\label{lem:opt}
Let $\A=-e^{i\alpha}\Delta_A+V$, and let $\{\mu_n\}_{n=1}^\infty$ denote the $n'th$
eigenvalue of $(\A^*\A)^{1/2}$.  Let further
\begin{displaymath}
  \sigma((-\Delta_A+|V|))=\{\nu_j\}_{j=1}^\infty \,.
\end{displaymath}
Then, there exists $C>0$ and
$n_0>0$, such that for all $n>n_0$ we have
\begin{equation}
\label{eq:14}
  \mu_n \leq C(1+\nu_n) \,.
\end{equation}
\end{lemma}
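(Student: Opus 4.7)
The plan is to reduce the lemma to the operator inequality
\[
\|\A u\|_2 \leq C\bigl(\|(-\Delta_A + |V|)u\|_2 + \|u\|_2\bigr) \qquad (*)
\]
valid for all $u \in D(-\Delta_A + |V|)$. Once $(*)$ is established, the result will follow from the min-max principle applied to the non-negative self-adjoint operator $\A^*\A$, whose form domain is $D(\A)$ and whose $n$-th eigenvalue is $\mu_n^2$. Indeed, taking $E$ to be the span of the first $n$ eigenfunctions of $-\Delta_A+|V|$ (which will lie in $D(\A)$ by $(*)$), one has $\|(-\Delta_A+|V|)u\|_2 \leq \nu_n \|u\|_2$ for every $u \in E$, so $(*)$ gives $\|\A u\|_2 \leq C(1+\nu_n)\|u\|_2$, and the min-max then yields $\mu_n \leq C(1+\nu_n)$.

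To prove $(*)$ on the dense subspace $\Dg$ of Lemma~\ref{lem:dens}, I would first use $\|\A u\|_2^2 \leq 2(\|\Delta_A u\|_2^2 + \|Vu\|_2^2)$, split $V = V_1+V_2$, apply the growth hypothesis~\eqref{eq:4}, and invoke Lemma~\ref{lem:aux} to reach
\[
\|\A u\|_2^2 \leq C\bigl(\|\Delta_A u\|_2^2 + \|V_1 u\|_2^2 + \|u\|_2^2\bigr),
\]
thereby reducing $(*)$ to the bound $\|\Delta_A u\|_2^2 + \|V_1 u\|_2^2 \leq C(\|(-\Delta_A+|V|)u\|_2^2 + \|u\|_2^2)$. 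Working first with the smooth potential $|V_1|$, I would expand
\[
\|(-\Delta_A+|V_1|)u\|_2^2 = \|\Delta_A u\|_2^2 + \||V_1|u\|_2^2 + 2\!\int_\Omega |V_1||\nabla_A u|^2\,dx + 2\Re\!\int_\Omega \bar u\,\nabla|V_1|\cdot\nabla_A u\,dx
\]
via integration by parts (legal on $\Dg$), drop the non-negative third term, and control the cross term by combining $|\nabla|V_1|| \leq |\nabla V_1| \leq Cm_{B,V_1}$ (from~\eqref{eq:2}) with Young's inequality, another application of Lemma~\ref{lem:aux}, and the standard interpolation $\|\nabla_A u\|_2^2 \leq \eta\|\Delta_A u\|_2^2 + C_\eta \|u\|_2^2$. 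After absorption this gives $\|\Delta_A u\|_2^2 + \||V_1|u\|_2^2 \leq C(\|(-\Delta_A+|V_1|)u\|_2^2 + \|u\|_2^2)$.

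For the passage from $|V_1|$ to $|V|$, I would use $\bigl||V|-|V_1|\bigr| \leq |V_2| \leq \epsilon m_{B,V_1}+C_\epsilon$ from~\eqref{eq:4} to obtain $\|(-\Delta_A+|V_1|)u\|_2 \leq \|(-\Delta_A+|V|)u\|_2 + \epsilon\|m_{B,V_1}u\|_2 + C_\epsilon \|u\|_2$; one more appeal to Lemma~\ref{lem:aux} controls $\|m_{B,V_1}u\|_2$ by $\|\Delta_A u\|_2 + \|V_1 u\|_2 + \|u\|_2$, and choosing $\epsilon$ sufficiently small lets me absorb, yielding $(*)$ on $\Dg$. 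Lemma~\ref{lem:dens} applied with $\PP = -\Delta_A+|V|$ shows $\Dg$ is dense in $D(-\Delta_A+|V|)$ under its graph norm, and closedness of $\A$ propagates $(*)$ to all of $D(-\Delta_A+|V|)$, in particular establishing the inclusion $D(-\Delta_A+|V|)\subset D(\A)$ that was used in the min-max step.

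The main obstacle is the absorption of the cross term $2\Re\int_\Omega \bar u\,\nabla|V_1|\cdot\nabla_A u\,dx$ produced by the integration by parts: it can only be tamed by the joint use of assumption~\eqref{eq:2} (to control $\nabla|V_1|$ by $m_{B,V_1}$) and Lemma~\ref{lem:aux} (to control $\|m_{B,V_1}u\|_2$ back in terms of $\|\Delta_A u\|_2$ and $\|V_1 u\|_2$). A secondary subtlety is that $V_2$ carries no regularity assumption, so $|V|$ itself need not be differentiable and one cannot expand $\|(-\Delta_A+|V|)u\|_2^2$ directly; this forces the $|V_1|$-intermediate step and the perturbative treatment of $|V|-|V_1|$ described above.
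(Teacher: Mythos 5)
Your proposal is correct and follows essentially the same route as the paper: the key operator bound $\|\A u\|_2^2\leq C(\|(-\Delta_A+|V|)u\|_2^2+\|u\|_2^2)$ on $\Dg$, obtained by integrating by parts against $|V_1|$, controlling the cross term via \eqref{eq:2} together with Lemma~\ref{lem:aux} and the interpolation inequality, treating $V_2$ perturbatively through \eqref{eq:4}, and then combining Lemma~\ref{lem:dens} with the min-max principle (the paper's Proposition~11.9 of \cite{hel13} step) applied to $\A^*\A$ on the span of the low eigenfunctions of $-\Delta_A+|V|$. The only differences are organizational (you expand $\|(-\Delta_A+|V_1|)u\|_2^2$ and compare with $\|(-\Delta_A+|V|)u\|_2$, and you transfer the bound to exact eigenfunctions via closedness of $\A$ rather than testing on $\Dg$-approximations), which does not change the substance of the argument.
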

\begin{proof}
Clearly, for every $u\in\Dg$,
  \begin{equation}
\label{eq:23}
     \|(-\Delta_A+|V|)u\|_2^2 =  \|\Delta_Au\|_2^2 + \|Vu\|_2^2 - 2\Re\langle |V|u,\Delta_Au\rangle \,.
  \end{equation}
As
\begin{displaymath}
  2\Re\langle (|V_1+V_2|-|V_1|)u,\Delta_Au\rangle \leq  2\||V_2|u\|_2\|\Delta_Au\|_2\,,
\end{displaymath}
it follows, after integration by parts, that
\begin{equation}
\label{eq:20}
   2\Re\langle |V|u,\Delta_Au\rangle \leq   - 2\langle |V_1|\nabla_Au,\nabla_Au\rangle  + 2|\Re\langle u\nabla |V_1|,\nabla_Au\rangle| +
   2\||V_2|u\|_2\|\Delta_Au\|_2\,.
\end{equation}
For every $\epsilon>0$ we have, by \eqref{eq:4}, that
\begin{multline}
\label{eq:21}
 2 |\Re\langle u\nabla |V_1|,\nabla_Au\rangle \leq \epsilon\|u\nabla V_1\|_2^2 + \frac{1}{\epsilon}\|\nabla_Au\|_2^2\\
\leq 
 C\epsilon\, (\|V_1u\|_2^2 + \||B|u\|_2^2)- \frac{1}{\epsilon}\, \Re\langle u,\Delta_Au\rangle 
\,.
\end{multline}
By \eqref{eq:19}, for each
positive $\epsilon$, there exists $C_\epsilon$ such that
\begin{equation}
\label{eq:12}
  2 |\Re\langle u\nabla |V_1|,\nabla_Au\rangle| \leq \epsilon(\|V_1u\|_2^2+ \|\Delta_Au\|_2^2)+C_\epsilon \|u\|_2^2\,.
\end{equation}

For the last term on the right-hand-side we have, by \eqref{eq:4},
that for any $\epsilon>0$ there exists
$C_\epsilon>0$ for which
\begin{displaymath}
  2\|V_2u\|_2\|\Delta_Au\|_2 \leq \epsilon\, (\|V_1u\|_2^2+\||B|u\|_2^2+\|\Delta_Au\|_2^2) + C_\epsilon\|u\|_2^2 \,.
  \end{displaymath}
  With the aid of \eqref{eq:19}, we then conclude that for every $\epsilon>0$
  there exists $C_\epsilon>0$ such that
\begin{displaymath}
2\|V_2u\|_2\|\Delta_Au\|_2 \leq  \epsilon\, (\|\Delta_Au\|_2^2+\|Vu\|_2^2)+C_\epsilon\|u\|_2^2 \,.
\end{displaymath}
Substituting the above together with \eqref{eq:12} 
into \eqref{eq:20} yields that for every $\epsilon>0$ there exists $C_\epsilon>0$
such that
\begin{displaymath}
  2\Re\langle |V|u,\Delta_Au\rangle \leq \epsilon\, \big(\||V|u\|_2^2+ \|\Delta_Au\|_2^2\big) + C_\epsilon\|u\|_2^2\,.
\end{displaymath}
Substituting the above into \eqref{eq:23} we obtain, if we choose $\epsilon$ sufficiently
small, 
\begin{equation}
\label{eq:48}
   \|\Delta_Au\|_2^2 + \|Vu\|_2^2  \leq  C\, \|(-\Delta_A+|V|)u\|_2^2 \,.
\end{equation}
From here we easily obtain that there exists $C>0$ such that  for all
$u\in\Dg$ 
    \begin{equation}
\label{eq:24}
\|(-e^{i\alpha}\Delta_A+V)u\|_2^2 \leq C\, (\|(-\Delta_A+|V|)u\|_2^2 + \|u\|_2^2) \,.
  \end{equation}
We now use Lemma \ref{lem:dens} to conclude that $\Dg$ is dense in
$D(-\Delta_A+|V|)$. Hence for all $n\geq1$ there exist $\{u_k\}_{k=1}^n\subset\Dg$
such that
\begin{displaymath}
  \langle(-\Delta_A+|V|)u,u\rangle \leq 2\nu_n\|u\|_2^2 \;,\; \forall u\in{\rm
    span}\{u_k\}_{k=1}^n \,.
\end{displaymath}
We can now obtain \eqref{eq:14} by using once again Proposition 11.9
in \cite{hel13} (as in the proof of \eqref{eq:25}) together with
\eqref{eq:24} and the fact that $\Dg\subset D(-e^{i\alpha}\Delta_A+V)$.
\end{proof}

\begin{remark}
  The above lemma, together with \eqref{eq:25} proves that for
  any $p>0$ and $\lambda\in\rho(-e^{i\alpha}\Delta_A+V)\cap\rho(-\Delta_A+|V|)$,
  \begin{equation}
\label{eq:8}
    (-\Delta_A+|V|-\lambda)^{-1}\in C^p \Leftrightarrow(-e^{i\alpha}\Delta_A+V-\lambda)^{-1}\in C^p \,,
  \end{equation}
whereas Theorem \ref{thschatten} provides us only with an upper bound
for the optimal value of $p$. While the Dirichlet realization of
$-\Delta_A+|V|$ in $\Omega$ is self-adjoint, the authors are unaware of an
asymptotic expansion for the counting function associated with it in
the necessary generality. There is, however, reason to believe
that (\ref{eq:26}) is optimal. An example where such an asymptotic
expansion has been derived is given in \cite{L-BNo,leetal90} for the
case $\Omega=\R^d\,$, and where $A$ and $V$ are polynomials.
\end{remark}

We now prove  optimality in the general case for the particular case
when $\Omega$ is the half-space in $\R^d$.
\begin{lemma}
Let $\Omega=\R_+^d$ where
\begin{displaymath}
  \R_+^d = \{ x\in\R^d \,|\, x_d>0\,\} \,.
\end{displaymath}  
Let further 
\begin{displaymath}
  \Dg_1= \{ u\in\Dg,|\, \PP u\in\Dg \} \,.
\end{displaymath}
Then, there exists $C(\Omega,\PP)>0$, such that for all $u\in\Dg_1$
  we have
  \begin{equation}
 \label{eq:45}
\|\PP u\|_2^2 \leq C[\|(\Delta_A+|V|)u\|_2^2 + \|u\|_2^2] \,.
  \end{equation}
\end{lemma}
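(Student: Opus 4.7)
The approach is to reduce the comparison between $\PP = \PP_0 + V$ and $-\Delta_A + |V|$ to a purely magnetic comparison between $\PP_0 = -\sum_k e^{2i\alpha_k}\partial_{A_k}^2$ and $\Delta_A$. First I would apply the triangle inequality to write $\|\PP u\|_2 \leq \|\PP_0 u\|_2 + \|Vu\|_2$ and invoke \eqref{eq:48} of Lemma~\ref{lem:opt} to control $\|Vu\|_2^2$ by $\|(-\Delta_A+|V|)u\|_2^2$. Since $\|\PP_0 u\|_2^2 \leq d \sum_k \|\partial_{A_k}^2 u\|_2^2$ by Cauchy--Schwarz, the lemma reduces to the core estimate
\begin{equation*}
\sum_{k=1}^d \|\partial_{A_k}^2 u\|_2^2 \leq C\bigl(\|\Delta_A u\|_2^2 + \|m_{B,V_1}u\|_2^2 + \|u\|_2^2\bigr)\,.
\end{equation*}
Once this is in hand, \eqref{eq:19} absorbs $\|m_{B,V_1}u\|_2^2$ into $\|\Delta_A u\|_2^2 + \|V_1 u\|_2^2 + \|u\|_2^2$, and a second appeal to \eqref{eq:48} disposes of $\|V_1 u\|_2^2$.

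To prove the core estimate I would expand $\|\Delta_A u\|_2^2 = \sum_{j,k}\langle \partial_{A_j}^2 u, \partial_{A_k}^2 u\rangle$ and integrate by parts twice in each cross term, moving one derivative from each factor to produce $\|\partial_{A_j}\partial_{A_k}u\|_2^2$ plus commutator corrections involving $B_{jk} = i[\partial_{A_k},\partial_{A_j}]$ and $\nabla B_{jk}$. The half-space geometry is essential here: integration by parts in any tangential direction $j \neq d$ contributes no boundary term since $\nu_j = 0$; whenever a normal ($j=d$) integration is unavoidable, I would order the two integrations so that the boundary integrand contains either $u$ itself or a tangential derivative $\partial_{A_m}u$ with $m \neq d$, both of which vanish on $\{x_d=0\}$ by the Dirichlet condition $u|_{\partial\Omega} = 0$. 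All boundary contributions therefore disappear, and one is left with an identity of the form
\begin{equation*}
\|\Delta_A u\|_2^2 = \sum_{j,k}\|\partial_{A_j}\partial_{A_k}u\|_2^2 + R\,,
\end{equation*}
where $R$ is a sum of pairings of first derivatives of $u$ with $B$ and $\nabla B$ applied to $u$.

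The main obstacle will be the careful bookkeeping of these commutator corrections, since each cross term produces several of them through the non-commutativity of the $\partial_{A_k}$'s. I expect that, using Cauchy--Schwarz together with assumption \eqref{eq:2} which bounds $|\nabla B|$ by $m_{B,V_1}$, each contribution admits the bound
\begin{equation*}
|R| \leq \varepsilon\sum_{j,k}\|\partial_{A_j}\partial_{A_k}u\|_2^2 + C_\varepsilon\bigl(\|m_{B,V_1}u\|_2^2 + \|\nabla_A u\|_2^2 + \|u\|_2^2\bigr)\,,
\end{equation*}
after which the intermediate term $\|\nabla_A u\|_2^2$ is absorbed via \eqref{ineqstand} and, choosing $\varepsilon$ small enough, the first summand is moved to the left-hand side. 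This yields the core estimate and, combined with the reduction of the first paragraph, completes the proof.
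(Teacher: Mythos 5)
Your argument is correct, and it is organized differently from the paper's even though it uses the same toolkit (double integration by parts with the commutator identity \eqref{eq:17a}, and the auxiliary bounds \eqref{eq:2}, \eqref{eq:19}, \eqref{eq:44}, \eqref{eq:48}). The paper expands $\|\PP_0u\|_2^2$ itself, and is therefore forced to deal with the boundary term coming from the normal--normal pairing; killing it requires the extra boundary information encoded in $\Dg_1$, namely $\PP_0u|_{\partial\Omega}=0$ (used once in the first integration by parts and again in \eqref{eq:37}), and then a second identity for $\|\Delta_Au\|_2^2$, which again uses $\Delta_Au|_{\partial\Omega}=0$ deduced from $u\in\Dg_1$, to compare the two via \eqref{eq:47}. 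You instead give up the exact identity, bounding $\|\PP_0u\|_2^2\leq d\sum_k\|\partial_{A_k}^2u\|_2^2$, and expand $\|\Delta_Au\|_2^2$; the point, which you should make explicit, is that the diagonal $(d,d)$ term requires no integration by parts at all, while every cross term contains at least one tangential index, so the boundary contributions vanish using only $u|_{\partial\Omega}=0$ on the flat boundary (tangential first and second derivatives of $u$ vanish there). Consequently your estimate holds for all $u\in\Dg$, not merely $u\in\Dg_1$, i.e. you never use the hypothesis $\PP u\in\Dg$; what you lose is only a dimensional constant, which is irrelevant for \eqref{eq:45}. Two small points to tighten: the commutator term $\langle B_{km}\partial_{A_m}u,\partial_{A_k}u\rangle$ is most easily handled as in \eqref{eq:10}, via \eqref{eq:44}, which produces $\|\Delta_Au\|_2^2$ on the right rather than the $\varepsilon\sum_{j,k}\|\partial_{A_j}\partial_{A_k}u\|_2^2$ term you anticipate (harmless, since $\|\Delta_Au\|_2^2$ sits on the right of your core estimate anyway); and the final step "dispose of $\|V_1u\|_2^2$ by \eqref{eq:48}" needs the intermediate bound $\|V_1u\|_2\leq\|Vu\|_2+\|V_2u\|_2$ together with \eqref{eq:4} and a standard absorption (all quantities being finite for $u\in\Dg$), exactly as in the manipulations of Lemma \ref{lem:opt}.
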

\begin{proof}
  Clearly,
  \begin{displaymath}
   \|\PP_0u\|_2^2 = \sum_{k,m=1}^d e^{2i(\alpha_k-\alpha_m)}
   \langle\partial_{A_m}^2u,\partial_{A_k}^2u\rangle    \,.
  \end{displaymath}
We now write
\begin{displaymath}
  \langle\partial_{A_m}^2u,\partial_{A_k}^2u\rangle = -
  \langle\partial_{A_k}\partial_{A_m}^2u,\partial_{A_k}u\rangle + \int_{\partial\Omega}
  \partial_{A_m}^2u\cdot \overline{\partial_{A_k}u}\,\nu_k \,ds\,.
\end{displaymath}
Since $u\in\Dg_1$, we have that $\PP_0 u=0$ on $\partial\Omega$, and hence
\begin{displaymath}
 \sum_{m=1}^d e^{2i(\alpha_k-\alpha_m)} \int_{\partial\Omega}
 \partial_{A_m}^2u \cdot \overline{ \partial_{A_k}u}\,\nu_k \,ds=0\,, \forall1\leq k\leq d \,.
\end{displaymath}
Consequently,
\begin{equation}
 \label{eq:46}
 \|\PP_0^2u\|_2^2 =- \sum_{k,m=1}^d e^{2i(\alpha_k-\alpha_m)}
 \langle\partial_{A_k}\partial_{A_m}^2u,\partial_{A_k}u\rangle \,.
\end{equation}

Next we write
\begin{multline}
\label{eq:38}
  \langle\partial_{A_k}\partial_{A_m}^2u,\partial_{A_k}u\rangle=
  \langle\partial_{A_m}\partial_{A_k}\partial_{A_m}u,\partial_{A_k}u\rangle + i\langle
  B_{km}\partial_{A_m}u,\partial_{A_k}u\rangle=\\  \langle\partial^2_{A_m}\partial_{A_k}u,\partial_{A_k}u\rangle + i(\langle
  B_{km}\partial_{A_m}u,\partial_{A_k}u\rangle+\langle
  \partial_{A_m}B_{km}u,\partial_{A_k}u\rangle)\,.
\end{multline}
Integration by parts yields
\begin{equation}
\label{eq:39}
  - \langle\partial^2_{A_m}\partial_{A_k}u,\partial_{A_k}u\rangle = \|\partial_{A_m}\partial_{A_k}u\|_2^2 +\int_{\partial\Omega}
  \nu_m \partial_{A_m}\partial_{A_k}u\,\overline{ \partial_{A_k}u} \,ds\,. 
\end{equation}
As $u=0$ on the boundary, the surface integral on the right-hand-side
vanishes whenever $(m,k)\neq(d,d)$. If $m=k=d$, we use the fact that
$\PP_0^2 u=0$ on the boundary to obtain
\begin{equation}
\label{eq:37}
  \int_{\partial\Omega}\partial_{A_d}\partial_{A_d}u \, \overline{\partial_{A_d}u}\,ds=
  \sum_{n=1}^{d-1}e^{2i(\alpha_d-\alpha_n)} \int_{\partial\Omega} \partial_{A_n}\partial_{A_n}u\, \overline{\partial_{A_d}u}\,ds=0\,.
\end{equation}
Combining the above with \eqref{eq:38}, \eqref{eq:39}, and
\eqref{eq:46}, yields
\begin{multline}
\label{eq:47}
  \|\PP_0u\|_2^2 = \Re\sum_{k,m=1}^d
  e^{2i(\alpha_k-\alpha_m)}\big[\|\partial_{A_m}\partial_{A_k}u\|_2^2 \\- i(\langle
  B_{km}\partial_{A_m}u,\partial_{A_k}u\rangle+\langle
  \partial_{A_m}B_{km}u,\partial_{A_k}u\rangle)\big]\,.
\end{multline}

We proceed by estimating the two rightmost terms in \eqref{eq:47}. To
this end we use \eqref{eq:44} to obtain that
\begin{equation}
\label{eq:10}
  |\langle B_{km}\partial_{A_m}u,\partial_{A_k}u\rangle| \leq  \||B|^{1/2}\nabla_Au\|_2^2 \leq
  C(\|\Delta_Au\|_2^2 + \|Vu\|_2^2 +\|u\|_2^2)\,.
\end{equation}
For the last term on the right-hand-side we have
\begin{displaymath}
 \langle\partial_{A_m}B_{km}u,\partial_{A_k}u\rangle = \langle B_{km}\partial_{A_m}u,\partial_{A_k}u\rangle +
 \langle u\partial_mB_{km},\partial_{A_k}u\rangle \,.
\end{displaymath}
The first term on the right-hand-side of the above equation has
already been estimated by \eqref{eq:10}. For the second term we use
\eqref{eq:2} and \eqref{eq:19} to obtain
\begin{equation}
\label{eq:34}
   |\langle u\partial_mB_{km},\partial_{A_k}u\rangle| \leq C\|m_{B,V}u\|_2\|\nabla_Au\|_2 \leq C(\|\Delta_Au\|_2^2 + \|Vu\|_2^2 +\|u\|_2^2)\,.
\end{equation}
Substituting the above together with \eqref{eq:10} into \eqref{eq:47}
yields
\begin{displaymath}
  \|\PP_0u\|_2^2 \leq  \Re\sum_{k,m=1}^d
  e^{2i(\alpha_k-\alpha_m)}\|\partial_{A_m}\partial_{A_k}u\|_2^2 +  C(\|\Delta_Au\|_2^2 + \|Vu\|_2^2 +\|u\|_2^2)\,.
\end{displaymath}
Hence,
\begin{equation}
  \label{eq:15}
 \|\PP_0u\|_2^2 \leq  \Re\sum_{k,m=1}^d\|\partial_{A_m}\partial_{A_k}u\|_2^2 +  C(\|\Delta_Au\|_2^2 + \|Vu\|_2^2 +\|u\|_2^2)\,.
\end{equation}

It is easy to show that if $u\in\Dg_1$ then $\Delta_Au=0$ on $\partial\Omega$: As
 $\partial^2_{A_n}u = 0$ on $\partial\Omega$ for all $1\leq n\leq d$, both $\Delta_Au|_{\partial\Omega}=0$ and
 $\PP_0 u|_{\partial\Omega}=0$ are equivalent to $\partial^2_du|_{\partial\Omega}=0$. Hence, we may
 conclude from \eqref{eq:47} that
 \begin{displaymath}
   \|\Delta_Au\|_2^2 = \Re\sum_{k,m=1}^d\big[\|\partial_{A_m}\partial_{A_k}u\|_2^2 \\- i(\langle
  B_{km}\partial_{A_m}u,\partial_{A_k}u\rangle+\langle
  \partial_{A_m}B_{km}u,\partial_{A_k}u\rangle)\big]\,.
 \end{displaymath}
Hence, by \eqref{eq:15}
\begin{multline*}
   \|\PP_0u\|_2^2 \leq \|\Delta_Au\|_2^2 -
   \Im\sum_{k,m=1}^d\big[(\langle B_{km}\partial_{A_m}u,\partial_{A_k}u\rangle+\langle\partial_{A_m}B_{km}u,\partial_{A_k}u\rangle)\big] \\+  C(\|\Delta_Au\|_2^2 + \|Vu\|_2^2 +\|u\|_2^2)\,.   
\end{multline*}
Using \eqref{eq:10} and \eqref{eq:34} once again yields
\begin{displaymath}
 \|\PP_0u\|_2^2 \leq  C\, (\|\Delta_Au\|_2^2 + \|Vu\|_2^2 +\|u\|_2^2)\,,
\end{displaymath}
and hence
\begin{displaymath}
  \|\PP u\|_2^2 \leq  C\, (\|\Delta_Au\|_2^2 + \|Vu\|_2^2 +\|u\|_2^2)\,.
\end{displaymath}
The proof of \eqref{eq:34} can now be easily completed with the aid
of \eqref{eq:48}(which is valid for all $u\in\Dg$).
\end{proof}

An immediate conclusion is
\begin{corollary}
  Let $\{\mu_n\}_{n=1}^\infty$ and $\{\nu_n\}_{n=1}^\infty$ be defined as in
  Lemma \ref{lem:3.1}, and let $\Omega=\R^d_+$ or $\Omega=\R^d$. Then, there
  exists $C>0$ and $n_0\in\N$ such that
  \begin{equation}
\label{eq:16}
    \mu_n \leq C(1+\nu_n) \,.
  \end{equation}
Thus, if the resolvent of $\PP$ is in $C^p(L^2(\Omega,\C))$ for some
$p>0$, the the same conclusion follows for the resolvent of
$-\Delta_A+|V|$. 
\end{corollary}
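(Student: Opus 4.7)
The strategy is to combine the operator inequality \eqref{eq:45} of the preceding lemma with the variational characterization of singular values, in the same spirit as the proofs of Lemmas~\ref{lem:3.1} and \ref{lem:opt} but with the roles of $\PP$ and $-\Delta_A+|V|$ exchanged.

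First, I would let $(u_k)_{k\geq 1}$ be an $L^2(\Omega,\C)$-orthonormal basis of eigenfunctions of the self-adjoint Dirichlet realization of $-\Delta_A+|V|$, with eigenvalues $\nu_1\leq\nu_2\leq\cdots$, and set $E_n:=\mathrm{span}\{u_1,\ldots,u_n\}$. By the spectral theorem, each $u\in E_n$ satisfies $\|(-\Delta_A+|V|)u\|_2\leq \nu_n\|u\|_2$. A standard integration-by-parts estimate, using $|V|\geq 0$ and \eqref{eq:2}, shows that for $u\in D(-\Delta_A+|V|)$ both $\Delta_A u$ and $|V|u$ separately belong to $L^2$; in particular $\PP u\in L^2$, so $E_n\subset D(\PP)$.

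Next, I would extend \eqref{eq:45} from $\Dg_1$ to $D(-\Delta_A+|V|)\cap D(\PP)$, and in particular to $E_n$, by a density argument in the graph norm $\|u\|_2+\|\PP u\|_2+\|(-\Delta_A+|V|)u\|_2$. The argument mirrors Lemma~\ref{lem:dens}: truncate with the cutoffs $\eta_k$ and mollify; in the half-space case one must also cut in the normal direction so that the second-order condition $\PP u\in \Dg$ built into $\Dg_1$ is satisfied by the approximants. Combined with the preceding step, this yields, for every $u\in E_n$,
\[
\|\PP u\|_2^2\leq C\bigl(\|(-\Delta_A+|V|)u\|_2^2+\|u\|_2^2\bigr)\leq C(\nu_n^2+1)\|u\|_2^2\leq 2C(1+\nu_n)^2\|u\|_2^2.
\]
Applying Proposition~11.9 of \cite{hel13} to the positive self-adjoint operator $\PP^*\PP$ (whose $n$-th eigenvalue is $\mu_n^2$ and whose form domain is $D(\PP)\supset E_n$), with the $n$-dimensional test subspace $E_n$, then gives $\mu_n^2\leq 2C(1+\nu_n)^2$, which is \eqref{eq:16}. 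For the Schatten conclusion, \eqref{eq:16} yields $1+\mu_n\leq (1+C)(1+\nu_n)$, hence $(1+\nu_n)^{-p}\leq (1+C)^p(1+\mu_n)^{-p}$; summability of $(1+\mu_n)^{-p}$ (equivalent to $(\PP-\lambda)^{-1}\in C^p$) therefore transfers to summability of $(1+\nu_n)^{-p}$ (equivalent to $(-\Delta_A+|V|-\lambda)^{-1}\in C^p$).

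The main obstacle is the density used in the second step. Lemma~\ref{lem:dens} gives density of $\Dg$ in $D(\PP^D)$, but here we need density of the smaller set $\Dg_1$, whose defining condition $\PP u\in\Dg$ encodes a second-order Dirichlet-type restriction at $\partial\Omega$. For $\Omega=\R^d$ there is no boundary and the argument reduces to the standard cutoff and mollification; for $\Omega=\R^d_+$ one needs an additional normal-direction cutoff (or odd reflection followed by truncation) to arrange that the approximants lie in $\Dg_1$ while still converging in the graph norm used above. Once this is in place, the variational step goes through verbatim.
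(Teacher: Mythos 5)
Your variational skeleton (finite-dimensional test subspaces plus Proposition 11.9 of \cite{hel13}, and the final transfer of $C^p$ membership from $\mu_n\leq C(1+\nu_n)$) is the same as the paper's, but the way you feed \eqref{eq:45} into it has a genuine gap. By insisting on the span $E_n$ of the exact eigenfunctions of $-\Delta_A+|V|$, you are forced to (a) prove $E_n\subset D(\PP)$ and (b) extend \eqref{eq:45} from $\Dg_1$ to $D(-\Delta_A+|V|)\cap D(\PP)$ via density of $\Dg_1$ in the combined graph norm $\|u\|_2+\|\PP u\|_2+\|(-\Delta_A+|V|)u\|_2$. Step (a) is not justified as written: separation ($\Delta_Au$ and $|V|u$ in $L^2$) does not give $\PP u\in L^2$, because $\PP_0=-\sum_k e^{2i\alpha_k}\partial_{A_k}^2$ is an anisotropic combination; you need each $\partial_{A_k}\partial_{A_m}u$ in $L^2$, and on the half-space this is exactly where the boundary terms analyzed in the preceding lemma enter. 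Step (b) is the real obstruction, and neither of your suggested mechanisms works: a cutoff in the normal direction produces approximants supported away from $\partial\Omega$, and convergence in a norm controlling $\|\PP(\cdot)\|_2$ (hence, by elliptic regularity on the flat half-space, second derivatives) would force the limit to have vanishing normal derivative on $\partial\Omega$, which eigenfunctions generically do not; odd reflection plus mollification preserves the Dirichlet condition but does not put the approximants in $\Dg_1$, since $\PP_0u$ contains a first-order term proportional to $A_d\,\partial_du$ that does not vanish on $\{x_d=0\}$. So the density you yourself flag as the main obstacle is left unproved, and it is at least as hard as the statement you are trying to prove.

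The paper avoids all of this by never using \eqref{eq:45} outside $\Dg_1$. It first checks that $u\in\Dg_1$ implies $(-\Delta_A+|V|)u\in\Dg$ (the condition $\Delta_Au|_{\partial\Omega}=0$ follows from $\PP_0u|_{\partial\Omega}=0$ exactly as in \eqref{eq:37}); then, combining Lemma \ref{lem:dens} with statement 2 of Theorem \ref{LaxMilgramv2} applied to $S=(-\Delta_A+|V|)^2$ and $\Vg=D(-\Delta_A+|V|)$, it concludes that $\Dg_1$ is dense in $D(-\Delta_A+|V|)$ for the graph norm of $-\Delta_A+|V|$ alone. That weaker density suffices: for each $n$ one picks an $n$-dimensional subspace of $\Dg_1$ on which $\|(-\Delta_A+|V|)u\|_2\leq 2(1+\nu_n)\|u\|_2$, applies \eqref{eq:45} there (where it is already proven), and invokes Proposition 11.9 of \cite{hel13} to get \eqref{eq:16}. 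If you replace your exact eigenspaces by such approximating subspaces inside $\Dg_1$, your argument closes without any extension of \eqref{eq:45}.
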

\begin{proof}
  Let $u\in\Dg_1$. We need to show first that $(-\Delta_A+|V|)u\in\Dg$, or
  equivalently that $\Delta_Au|_{\partial\Omega}=0$ in the case $\Omega=\R^d_+$ (where
  $\partial\Omega$ is the hyperplane $x_d=0$). Since
  $u\in\Dg$, it follows that $\partial_{A_k}^2u|_{\partial\Omega}=0$ for all
  $1\leq k\leq d-1$. It is easy to show (see \eqref{eq:37}) that
  $\partial_{A_d}^2u|_{\partial\Omega}=0$ using the fact that $\PP_0u|_{\partial\Omega}=0$ for
  all $u\in\Dg_1$.

Since by Lemma \ref{lem:dens} $\Dg$ is dense in $D(-\Delta_A+|V|)$, it
follows that $\Dg_1$ is dense in $D\big((-\Delta_A+|V|)^2\big)$ under the
same graph norm. We can now use statement 2 of Theorem
\ref{LaxMilgramv2} with $S=(-\Delta_A+|V|)^2$ and $\Vg=D(-\Delta_A+|V|)$ to
prove that $\Dg_1$ is dense in $D(-\Delta_A+|V|)$. The proof of
\eqref{eq:16} can now be completed by using Proposition 11.9 in
\cite{hel13} as in the proof of Lemma \ref{lem:opt}.
\end{proof}

\section{Examples}
\label{sec:4}

\subsection{One-dimensional examples}
\subsubsection{The entire real line case.} 
Consider the case $\Omega=\R$ and
\begin{equation}
\label{eq:27}
  \LL = -\frac{d^2}{dx^2} + e^{i\theta}v_1(x)+v_2(x)\,,
\end{equation}
where $\theta\in(-\pi,\pi)$, defined initially on $C_0^\infty(\mathbb R)$.\\
In the above, $v_1(x)\in C^1(\R,\R)$ has  the asymptotic behaviour
\begin{equation}
\label{eq:28}
  |v_1|\sim|x|^\alpha \quad \text{as } |x|\to\infty \,,
\end{equation}
for some $\alpha>0$, 
\begin{equation}
\label{eq:29}
  |v^\prime_1| \leq C(1+|x|^{\alpha-1}) \,,
\end{equation}
and $v_2\in L^\infty_{loc}(\R,\C)$ satisfies
\begin{equation}
\label{eq:30}
  v_2 = o(|x|^\alpha) \quad \text{as } |x|\to\infty \,.
\end{equation}

Set $\PP= i \,\sign \theta \,e^{-i \theta } \LL$. Then $\PP$ meets
\eqref{eq:1}-\eqref{eq:6}, and we may apply Theorems \ref{thm:1.1}
and~\ref{prop:1.2}.  Henry \cite{he13} considers this example for the
case\break  $v(x)=x^{2k}$, $v_2\equiv0$.

By Theorem \ref{thschatten} it follows that $\LL^{-1}$ is in $
C^p$ for all $p>1/2+1/\alpha$. As the numerical range of $\LL+\mu$, for
sufficiently large $\mu\in\R_+$, lies in the sector $\arg \lambda\in[\theta-\pi,\theta]$, Theorem
\ref{thm:1.4} can be applied to obtain, for any $\theta\in( - \pi , \pi )$,
that ${\rm Span}\,(\LL) =L^2(\R,\C)$ whenever $\alpha > 2$.

In the particular case where $v_1 \to +\infty$, as $|x|\to\infty$, the numerical
range lies inside the sector $\arg \lambda\in[0,\theta]$ and hence for any
$\theta\in(-\pi,\pi)$ and $\alpha>\frac{2 |\theta|}{2\pi-|\theta|}$, ${\rm Span}\,(\LL)
=L^2(\R,\C)$.

The complex cubic oscillator is an example included in the
  class we introduce in \eqref{eq:27} (for the general case where
  $v_1$ can change its sign between $-\infty$ and $\infty$) which has been
frequently addressed in the literature (cf. \cite{sh02},
\cite{grma13}, \cite{detr00}, \cite{he13a} to name just a few
references). In this case, $\theta=\pi/2$ and
\begin{displaymath}
  v_1(x)=x^3 \quad,\quad  v_2(x)= \beta_2 x^2+ \beta_1 x \,,
\end{displaymath}
where $\beta_1,\beta_2\in\C$.  In \cite{si75}, the existence of an infinite
sequence of eigenvalues has been established (cf. \cite{sh02} for more
details). Completeness of the system of eigenfunctions in $L^2(\R,\C)$
has been established in \cite{sikr12} for the case $\beta_1=\beta_2=0$. In
\cite{he13a} completeness is extended to the case $i\beta_1\in\R$,
$\beta_2=0$. Here we show it in greater generality, without the need to
rely on the symmetries of the particular cases addressed in
\cite{he13a,sikr12}.

Another case that has been addressed in \cite{da02,da00} is
\begin{equation}
\label{eq:33}
  \LL=-\frac{d^2}{dx^2}+ ce^{i\theta}|x|^\alpha \,,
\end{equation}
where $c>0$ and $\theta\in(-\pi/2,\pi/2)$. It is stated in \cite{da00}, based on
\cite{sh90}, that the eigenspace of $\LL$ is dense in $L^2(\R,\C)$
when either $\alpha\geq1$, or $\alpha\in(0,1)$ and
\begin{displaymath}
  \theta\leq\frac{\pi\alpha}{2} \,.
\end{displaymath}
Since \eqref{eq:33} is a particular case of \eqref{eq:27}, it follows
that whenever $\theta\in(-\pi,\pi)$, $\LL^{-1}$ must be in $C^p$ for all
$p>1/2+1/\alpha$. The numerical range of $\LL$ however, is confined in
$\arg z\in(0,\theta)$. Hence, the eigenfunctions of $\LL$ form a complete
system whenever
\begin{displaymath}
  |\theta|< \frac{2\pi\alpha}{\alpha+2} \,.
\end{displaymath}
In particular, if $\theta\in(-\pi/2,\pi/2)$ completeness of the eigensystem
of $\LL$ is guaranteed for all $\alpha>2/3$. Note that for $0<\alpha<1$
\begin{displaymath}
  \frac{\pi\alpha}{2} < \frac{2\pi\alpha}{\alpha+2}\,.
\end{displaymath}
Hence, our method provides greater domains for $\theta$ and $\alpha$ where the
eigenspace of $\LL$ is dense in $L^2(\R,\C)$.

\subsubsection{The positive real line case:} 

Here we consider the same differential operator as in \eqref{eq:27},
but this time defined on $C_0^\infty(\mathbb R^+)$ and consider the
Dirichlet realization.  Assuming that $v_1$ and $v_2$ satisfy
\eqref{eq:28}, \eqref{eq:29}, and \eqref{eq:30}, we have by Theorem
\ref{thschatten} that $\LL^{-1}\in C^p$ whenever $p>1/2+1/\alpha$ as
before. However, as in the case where $\LL$ is given bt \eqref{eq:33},
since every direction outside $[0,\theta]$ is a direction of minimal
growth for $(\LL_+-\lambda)^{-1}$. Thus, we have that ${\rm Span}\,\LL_+
=L^2(\R,\C)$ whenever
\begin{displaymath}
  \theta < \frac{2\alpha\pi}{\alpha+2}\,.
\end{displaymath}
In particular, for the case $v_1=x$ and $v_2=0$, which is known as the
complex Airy's equation, we obtain that ${\rm Span}\,\LL_+
=L^2(\R,\C)$ whenever $\theta<2\pi/3$.

\subsection{Two-dimensional examples}

\subsubsection{Electric and magnetic potentials.} Let $A$ and $V$ be such that
$V=i\phi$ and $\curl A +i\phi$ is a holomorphic function of $z=x+iy$, $x$
and $y$ being the planar coordinates. Such a choice is in line with
the steady state Faraday's law, which in two dimensions read (assuming
all constants are equal to $1$)
\begin{displaymath}
  \nabla_\perp\curl A + \nabla\phi =0 \,.
\end{displaymath}
We further narrow our choice by setting
\begin{equation}
\label{eq:31}
  \curl A +i\phi = z^n \,.
\end{equation}
Consider the differential operator
$-\Delta_A+i\phi$ with $\Omega=\R^2$ (which is clearly a particular case of
\eqref{eq:6}). By  Theorem \ref{thschatten}, we 
have that $\PP^{-1}\in C^p$ for all $p>1+2/n$. It can be readily
verified that, independently of $\beta$, every direction outside
$[-\pi/2,\pi/2]$ is a direction of minimal growth for
$(\PP-\lambda)^{-1}$. Unfortunately, the condition of validity of our
theorems will lead to the condition $1 < \frac{1}{1 + \frac{2}{n}}$,
which is never satisfied. However, if we consider instead of $\R^2$, a
smooth domain $\Omega$ which is contained in a sector whose opening is
smaller than $\pi/(n+2)$, and such that the positive real axis is
contained in this sector, then the numerical range of $\PP$ is
contained in a sector whose opening is less than $\pi/(1+2/n)$ and we
can use Theorem \ref{thm:1.4} to prove completeness of the
eigenspace.

\subsubsection{Analytically dilated operators}
Next, for $m\geq2$ and $k\geq1$, let
\begin{equation}\label{model1}
  \A = -\frac{\partial^2}{\partial x^2} - \Big(\frac{\partial}{\partial y}-i\frac{x^m}{m}\Big)^2
  +iy^{2k} \,,
\end{equation}
Note that the numerical range lies in the sector $\arg \lambda\in[0,\pi/2]$.
In this case we use analytic dilation, as in \cite{AHP2}, to reduce
the angle of the sector where the numerical range lies. This would
allow us to establish the existence of countable set of eigenvalues,.
Thus, we define the dilation operator
\begin{displaymath}
u\to(\Ug u)(x,y)=e^{-i\frac{m-1}{2}\alpha}u(e^{i\alpha}x,e^{-im\alpha}y)
\end{displaymath}
where
\begin{equation}
\label{eq:32}
  \alpha=-\frac{\pi}{4m(k+1)} \,,
\end{equation}
to obtain that
\begin{equation} \label{modele2}
  \A_\alpha=\Ug^{-1}\A\Ug=-e^{2i\alpha}\frac{\partial^2}{\partial x^2} -
  e^{-2im\alpha}\Big(\frac{\partial}{\partial y}-i\frac{x^m}{m}\Big)^2 +e^{2ikm\alpha  +i \pi/2}y^{2k} \,.
\end{equation}

We next verify $\A_\alpha$ meets the conditions Theorem \ref{thm:1.1}.
Since $V_1=e^{2ikm\alpha+i \pi/2}y^{2k}$ and $$ 2km \alpha + \frac \pi 2=
-\frac{k \pi}{2(k+1)} + \frac \pi 2= \frac{ \pi}{2(k+1)} \,,$$ we obtain
that $$\Re V_1 = \cos \frac{ \pi}{2(k+1)} y^{2k} \,.$$ Hence the
conditions of Theorem \ref{thm:1.1} can be easily verified. We thus
obtain that $(\A_\alpha)^{-1}$ is in $C^p$ for all
\begin{equation}
\label{eq:50}
  p > 1+ \frac{1}{2k} + \frac{1}{m-1}=\frac{(2k+1)m-1}{2k(m-1)} \,.
\end{equation}

Next we observe that
\begin{displaymath}
  \langle u,(e^{-2i\alpha}\A_\alpha-\lambda)u\rangle = \|u_x\|_2^2 +
  e^{-2i(m+1)\alpha}\Big(\Big\|\Big(\partial_y-i\frac{x^m}{m}\Big)u\Big\|_2^2 +
  \|y^ku\|_2^2\Big) - \lambda\|u\|_2^2 \,.
\end{displaymath}
It can be easily verified from the above and \eqref{eq:32} that the
numerical range of $\A_\alpha$ lies in the sector
\begin{displaymath}
  \arg \lambda \in \Big[0,\frac{(m+1)\pi}{2m(k+1)}\Big] \,.
\end{displaymath}
which is always contained in $[0,3\pi/8]$ for the assumed range of $m$
and $k$ values.  Consequently, the effect of this analytic dilation
has permitted us to reduce the angle of the sector in which the
numerical range lies. Since
\begin{equation}
\label{eq:49}
  \frac{(m+1)}{2m(k+1)} < \frac{2k(m-1)}{(2k+1)m-1}\,,
\end{equation}
we obtain that ${\rm Span}\,\A_\alpha =L^2(\R^2,\C)$. By the same
arguments of \cite{AHP2} we then obtain that $\A$ has an infinite
sequence of eigenvalues for all $k\geq1$ and $m\geq2$.
\begin{remark}~
\begin{itemize}
\item It is not clear whether analytic dilation preserves the
  completeness of the system of generalized eigenfunctions.  Hence,
  the best we can obtain is the existence of an infinite discrete
  spectrum for $\A$.
\item Had we abandoned analytic dilation, we would have obtained,
  instead of \eqref{eq:49}, the condition
$$
\frac 12 < \frac{2k(m-1)}{(2k+1) m -1}
$$
which implies $m>2$ and 
\begin{equation}
k > \frac{m-1}{2 (m-2)}
\end{equation}
to achieve completeness.
\item Note that the results presented in \cite{L-BNo} can be applied
  to \eqref{modele2} to conclude the optimality of \eqref{eq:50}, as
  they can be used to obtain the precise Schatten class of the
  resolvent of $-\Delta_A + y^{2k}$ in $\R^2$. 
\end{itemize}
\end{remark}

\subsubsection{A half-plane problem.} 
Let
\begin{displaymath}
  \R^2_+ = \{ (x,y) \in \R^2 \,| \, y>0\}\,.
\end{displaymath}
Consider the case where $V=e^{i\theta}y$ and $A=x^2\hat{i}_y/2$.  We define
$\PP_+=-\Delta_A+V$ with $\Omega=\R^2_+$ (which is once again a particular case
of \eqref{eq:6}). Once again we have that $\PP_+^{-1}\in
C^p$ for all $p>1+2/n$. For $n=1$ and $\beta=0$ it can be easily shown
that the numerical range of $\PP_+$ is confined within the sector
$[0,\theta]$ in $\C$. Hence, every direction outside $[0,\theta]$ is a direction
of minimal growth for $(\PP_+-\lambda)^{-1}$, and consequently, ${\rm
  Span}\,\PP_+ =L^2(\R^2_+,\C)$ for all $\theta<\pi/3$.

\paragraph{Acknowledgements}
Y. Almog was supported by NSF grant DMS-1109030 and by US-Israel BSF grant
no.~2010194. 


\end{document}